%% file: main.tex
\documentclass{article}

\usepackage{PRIMEarxiv}

\usepackage{times}
\usepackage{helvet}
\usepackage{courier}
\usepackage{url}
\usepackage{caption}
\usepackage{graphicx} 
\usepackage{hyperref}
\usepackage{comment}
\usepackage{natbib}

\usepackage[ruled,vlined,linesnumbered]{algorithm2e}
\usepackage{comment}

\usepackage{environ}
\newif\ifshort
\usepackage{newfloat}
\usepackage{listings}
\usepackage{amsthm}
\usepackage{amssymb}
\usepackage{mathtools}
\usepackage{diagbox}
\usepackage{multirow}
\usepackage{graphicx,color}
\usepackage{algorithmic}
\usepackage[most]{tcolorbox}

\usepackage{enumitem}

\usepackage{xspace}
\usepackage{cleveref}
\usepackage{todonotes}
\usepackage{thm-restate}

\newtheorem{theorem}{Theorem}
\newtheorem{lemma}{Lemma}
\newtheorem{claim}{Claim}

\theoremstyle{definition}

\theoremstyle{definition}

\newcommand{\cqed}{\renewcommand{\qedsymbol}{$\lrcorner$}\qed}

\DeclareCaptionStyle{ruled}{labelfont=normalfont,labelsep=colon,strut=off} % DO NOT CHANGE THIS
\input{defs}

\title{Learning Augmented Exact Exponential Algorithms}
\author{
  Tatiana Belova\\
  ITMO University\\
  \And
  Yuriy Dementiev\\
  ITMO University\\
  \And
  Danil Sagunov\\
  ITMO University\\
}
\date{}

\begin{document}

\maketitle

\begin{abstract}
    The field of learning-augmented algorithms has demonstrated that machine-learned predictions can bypass worst-case lower bounds across a wide range of problems. So far, however, the focus has been almost exclusively on \emph{polynomial-time} algorithms, where predictions improve competitive ratios, approximation guarantees, or running times.  In this paper, we raise the question of whether predictions can push the frontier of \emph{exact exponential-time} algorithms for \classNP-hard problems.
    We answer this question affirmatively by proposing a \emph{general approach} that augments an entire family of state-of-the-art exact algorithms for a variety of subset selection problems.
    We show that a noisy predictor that is only marginally better than random guessing suffices to provably reduce the search space, and that the resulting runtime speedup scales smoothly with the prediction quality.
    Importantly, our algorithms require only \emph{pairwise} independence of predictions or, alternatively, do not require the knowledge of the predictor's accuracy---both strictly weaker and more realistic settings than typically assumed.
\end{abstract}

\section{Introduction}\label{sec:intro}
Many fundamental combinatorial optimisation problems---including \textsc{Feedback Vertex Set}, \textsc{Hitting Set}, and \textsc{Weighted $d$-SAT}---are \classNP-hard, and are therefore unlikely to admit polynomial-time exact algorithms. The best known exact algorithms for such problems run in time $\alpha^n \cdot n^{\Oh(1)}$ for some base $\alpha > 1$ that depends on the problem. 
Improving $\alpha$ even slightly is a major algorithmic achievement: shaving the base from $2$ to $1.9$ or from $1.7$ to $1.6$ represents an exponential gain in practice and requires fundamentally new combinatorial ideas. 
At the same time, the \emph{Strong Exponential Time Hypothesis} (SETH)~\citep{SETH} provides strong evidence that such improvements have limits: it asserts that for every $\varepsilon > 0$, there exists $k$ such that $k$-\textsc{SAT} cannot be solved in time
$\Oh((2-\varepsilon)^n)$, and reductions propagate this barrier to a broad family of subset problems.

This raises a natural question that has not been studied before: can \emph{machine-learned predictions} about the structure of the optimal solution push the frontier of exact exponential-time algorithms---and perhaps even circumvent the SETH barrier entirely?

\paragraph{Learning-augmented algorithms.}
A growing line of work studies how machine-learned predictions can be integrated into algorithm design while maintaining rigorous worst-case guarantees. 
This paradigm, known as \emph{learning-augmented algorithms} (LAA), was pioneered by~\citet{LV18} and~\citet{Pur18}, who showed that predictions about the unknown future allow online algorithms to bypass information-theoretic lower bounds. Since then, the paradigm has been applied with great success to caching~\citep{LV18}, scheduling~\citep{Pur18}, ski rental, and dynamic power management~\citep{Ant21}, with recent work refining the consistency--robustness tradeoff~\citep{El24}, integrating the learning of predictors directly into the algorithmic task~\citep{Elias24}, and extending the framework to mechanism design~\citep{Chr24} and Markov decision processes~\citep{Li23}.

More recently, predictions have been brought to bear on offline combinatorial optimization, where the goal is to circumvent hardness-of-approximation barriers rather than information-theoretic ones. 
\citet{GPS22} showed that $\varepsilon$-accurate predictions can break competitive ratio lower bounds, and the landmark work of \citet{CDGLP24} established that noisy element-wise predictions suffice to beat classical approximation hardness results for \textsc{MaxCut} and constraint satisfaction problems.
Subsequent work extended these ideas to \textsc{Vertex Cover}, \textsc{Set Cover}, and \textsc{Maximum Independent Set}~\citep{Aam25}, as well as to dense instances of \classNP-hard problems~\citep{BEX24}.
% Predictions have also been used to accelerate polynomial-time algorithms more broadly, improving running times for frequency estimation~\citep{Hsu19}, data stream algorithms~\citep{Jiang20}, and fundamental graph algorithms, as well as enabling algorithms to learn their own predictors from data~\citep{Kho22}.

Despite this remarkable progress, all existing LAA results concern algorithms that run in \emph{polynomial time}. 
Predictions have been used to improve constant factors of running time, competitive ratios, or approximation guarantees---but never to accelerate an algorithm whose baseline complexity is exponential. 
The question of whether predictions can push the frontier of exact exponential-time algorithms for \classNP-hard problems has remained entirely open. This is the question we address in this paper.

%\vspace*{-2em}

\paragraph{The information-theoretic perspective.}
It is instructive to compare our setting with prior LAA work on \classNP-hard problems through an information-theoretic lens. In existing work, the predictor provides from $n$ to $n^2$ bits of information---one binary prediction per element of the input (or per pair of elements)---and these bits are used to improve the output quality of a polynomial-time algorithm. In our setting, the predictor provides the same $n$ bits, yet the algorithmic task is fundamentally harder in two respects.
First, we seek an \emph{exact} solution: unlike approximation algorithms, our algorithms have no margin for error and must return a provably optimal subset. Second, the baseline algorithm operates in
\emph{exponential time}, systematically exploring an exponential search space and generating an exponential amount of internal information in the process. It is therefore far from obvious that $n$ bits of external prediction can have any meaningful impact on such a search.

Our results show that this intuition is wrong. Even a predictor that is barely better than random guessing---correct with probability only $\frac{1}{2} + \varepsilon$ per element---suffices to provably reduce the effective search space by an exponential factor. This reveals an unexpected \emph{information leverage effect}: a linear amount of predictive information is sufficient to tame an exponential search, provided it is exploited through the right algorithmic framework.

\paragraph{Setting.}
We study \classNP-hard problems in which the goal is to find a subset $S$ of a given universe $\mathcal{U} = [n]$ satisfying some combinatorial feasibility condition. We formalise this through an implicit set system $\Phi$, which maps each problem instance $I$ to a pair $(U_I, \mathcal{F}_I)$, where $U_I=[n]$ is the universe
and $\mathcal{F}_I \subseteq 2^{U_I}$ is the family of feasible solutions. 
The $\Phi$-\textsc{Subset} problem asks to find a set $S \in \mathcal{F}_I$, and the
$\Phi$-\textsc{Extension} problem asks whether a given partial set $X \subseteq U_I$ can be extended to a feasible solution by adding at most $k$ further elements.

\defparproblema{\textsc{$\Phi$-Subset}}{An instance $I$.}{}{Find a set $S \in \mathcal{F}_I$ if one exists.}

\defparproblema{\textsc{$\Phi$-Extension}}{An instance $I$, a set $X \subseteq [n]$, and an integer $k$.}{}{Does there exists a subset $Y \subseteq ([n] \setminus X)$ such that $X \cup Y \in \mathcal{F}_I$ and $|Y| \leq k$?}

\begin{example}
\textsc{Feedback Vertex Set} is a $\Phi$-\textsc{Subset}
problem where $U_I$ is the vertex set of the input graph $G$,
and $\mathcal{F}_I$ is the family of all subsets $S \subseteq
V(G)$ of size at most $k$ such that $G - S$ is acyclic.
The corresponding $\Phi$-\textsc{Extension} problem asks:
given a partial set $X \subseteq V(G)$, does there exist a
set $Y \subseteq V(G) \setminus X$ of size at most $k$ such
that $G - (X \cup Y)$ is acyclic?
\end{example}
To incorporate predictions into our algorithms, we adopt the
\emph{noisy predictor} framework. Let $I$ be an instance of a subset selection problem with universe $[n]$, and let $S^\star \subseteq [n]$ be an optimal solution. 
A \emph{noisy predictor} provides for each element $e \in U$ a binary prediction $\hat{y}_e \in \{0,1\}$ of whether $e \in S^\star$. We require that each prediction is correct with probability at least $\frac{1}{2} + \varepsilon$ for some $\varepsilon > 0$, and that
prediction errors are \emph{pairwise independent} across elements.

Several design choices in this model deserve discussion, as they distinguish our setting from prior work. 
The first concerns the \emph{independence structure} of the predictions. The standard
assumption in the literature---adopted, for instance, by \citet{NoisySorting}---is that prediction errors are \emph{mutually independent}. 
We instead require only \emph{pairwise independence}, which is strictly weaker.
This distinction matters in practice, as pairwise independence can be guaranteed by simple constructions such as hash-based predictors, whereas mutual independence is a much stronger structural requirement that is difficult to verify or enforce.
The second concerns \emph{knowledge of the prediction quality} $\varepsilon$. Most theoretical analyses assume that $\varepsilon$ is known to the algorithm, which allows it to be tuned optimally. 
In practice, however, the accuracy of a learned predictor is rarely known precisely in advance. We therefore also consider a regime in which $\varepsilon$ is unknown to the algorithm, and show that our approach remains effective in this setting.

\paragraph{Monotone local search.}
To exploit element-wise predictions algorithmically, we need a framework that explores the space of feasible subsets in a structured, prediction-amenable way. 
Concretely, we need a framework that builds solutions \emph{incrementally}---adding elements one by one---so that predictions about which elements belong to the optimal solution can be used to bias
the search towards promising regions of the solution space. 
The \emph{monotone local search} (MLS) framework of~\citet{MLS} provides exactly this structure.

% The MLS framework is built around the \textsc{$\Phi$-Extension} problem: given a partial solution $X \subseteq U$ and an integer $k$, does there exist a set $S \subseteq U \setminus X$ with $|S| \leq k$ such that $S \cup X$ is a feasible solution? 

The key insight of this framework is that any fixed-parameter tractable (\classFPT) algorithm for \textsc{$\Phi$-Extension} can be converted into an exact algorithm for the original \textsc{$\Phi$-Subset} problem.
Exact exponential-time algorithms are actively used in practice, 
with applications ranging from formal verification and bioinformatics  to competitive benchmarks such as the PACE Challenge~\citep{PACE, FominKratsch}.

\begin{theorem}[\citealt{MLS}]
If \textsc{$\Phi$-Extension} can be solved in time $c^k \cdot n^{\Oh(1)}$, then \textsc{$\Phi$-Subset} can be solved by a randomised algorithm in time $\bigl(2 - \tfrac{1}{c}\bigr)^n \cdot n^{\Oh(1)}$.
\end{theorem}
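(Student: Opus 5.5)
The plan is the standard monotone local search argument: random sampling of a partial solution, followed by the \classFPT{} extension algorithm. Fix the size $k$ of a target solution $S^\star \in \mathcal{F}_I$; we iterate over all $k \in \{0,\dots,n\}$, which costs only a polynomial factor. For a parameter $t \le k$ to be chosen, we sample a uniformly random $t$-subset $X \subseteq [n]$. We then run the assumed $c^{k-t}\cdot n^{\Oh(1)}$-time algorithm for \textsc{$\Phi$-Extension} on $(I, X, k-t)$. If $X \subseteq S^\star$, then $Y = S^\star \setminus X$ witnesses a yes-answer, and the algorithm finds a solution (self-reducibility or the algorithm's witness gives the set itself). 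The event $X \subseteq S^\star$ has probability $p_t = \binom{k}{t}/\binom{n}{t}$. Repeating the procedure $1/p_t$ times gives constant success probability, for an expected total time of
\[
T(n,k) = \min_{0\le t\le k} \frac{\binom{n}{t}}{\binom{k}{t}}\, c^{k-t} \cdot n^{\Oh(1)}.
\]

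It then remains to show that $\max_k T(n,k) \le (2-\tfrac1c)^n n^{\Oh(1)}$. Small $k$ is easy: if $c^k \le (2-1/c)^n$, we simply take $t=0$. In general I will not optimize $t$ via entropy estimates directly. Instead I plan a cleaner trick: replace the fixed-size sample by including each element independently, which makes the binomial sums transparent.

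Concretely, the averaging identity is
\[
\sum_t \binom{n}{t}\, c^{-t} \;=\; (1+1/c)^n .
\]
Weighted by the probability mass, it suggests that choosing $t$ near the maximizer of $\binom{k}{t}c^{t}\cdots$ yields the bound. The cleaner route is the known computation
\[
\sum_{t} \binom{n}{t}\binom{k}{t}^{-1} c^{k-t}\ \text{vs.}\ (2-1/c)^n,
\]
and I would establish the pointwise inequality
\[
\min_t \frac{\binom{n}{t}}{\binom{k}{t}}c^{k-t} \le n^{\Oh(1)}\,(2-1/c)^n
\]
by choosing $t$ explicitly. Set $\alpha = k/n$; for $\alpha \le 1/c$ we take $t=0$ (trivially fine). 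For larger $\alpha$ we take
\[
t = \frac{ck-n}{c-1}\cdot\frac{1}{1}\ \text{(rounded)},
\]
the value where the ratio of consecutive terms
\[
\frac{T_{t+1}}{T_t} = \frac{n-t}{c(k-t)}
\]
equals $1$. That is, $n - t = c(k-t)$, i.e.\ $t = (ck-n)/(c-1)$.

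The main obstacle is the final estimate: substituting this $t$ and bounding $\binom{n}{t}c^{k-t}/\binom{k}{t}$ by $(2-1/c)^n$ up to polynomial factors. My plan is to use the identity
\[
\frac{\binom{n}{t}}{\binom{k}{t}} = \frac{\binom{n}{k}}{\binom{n-t}{k-t}}.
\]
With $m = n-t$ and $k-t = m/c$, this gives
\[
T = \binom{n}{k}\, c^{m/c}\big/\binom{m}{m/c}.
\]
Since $\binom{m}{m/c} \ge n^{-\Oh(1)}\,\bigl(c^{1/c}(c/(c-1))^{1-1/c}\bigr)^{m}$ by the entropy bound, we obtain
\[
T \le n^{\Oh(1)} \binom{n}{k} (1-1/c)^{m(1-1/c)}.
\]
Then $\binom{n}{k}\le 2^{nH(\alpha)}$. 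Maximizing the resulting exponent over $\alpha \in [1/c,1]$ should give exactly $\log(2-1/c)$, attained at $\alpha = 1/(2c-1)\cdot c$, which I would verify by differentiating. If the calculus turns out messy, the fallback is the probabilistic bound
\[
\frac{\binom{n}{k}}{\binom{m}{k-t}}\,c^{k-t} \le \frac{\sum_j \binom{n}{j}(1-1/c)^{n-j}}{\cdots},
\]
which bounds every term by the binomial sum $(1+(1-1/c))^n = (2-1/c)^n$.
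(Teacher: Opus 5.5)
Your proposal is correct and is essentially the argument the paper relies on (its sketch of monotone local search together with the $c_2=1$ case of its modified local-search lemma). You sample random $t$-subsets and take $t=0$ when $k\le n/c$, using $c^{1/c}\le 2-\frac1c$ (weighted AM--GM). Otherwise you take $t$ with $k-t\approx (n-t)/c$, rewrite $\binom{n}{t}/\binom{k}{t}=\binom{n}{k}/\binom{n-t}{k-t}$, and lower-bound $\binom{n-t}{k-t}c^{-(k-t)}$ by an entropy (binomial-mode) estimate.

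At your choice of $t$ one has $m(1-\frac1c)=(n-t)-(k-t)=n-k$, so your bound is already $\binom{n}{k}(1-\frac1c)^{n-k}\cdot n^{\Oh(1)}\le (2-\frac1c)^n\cdot n^{\Oh(1)}$ by the binomial theorem. That is the paper's final step (your fallback), and it makes the calculus over $\alpha$ unnecessary; rounding $t$ to an integer costs only a polynomial factor.
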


This single theorem, applied to state-of-the-art \classFPT algorithms,immediately yields the best known exact algorithms for a broad family of \classNP-hard subset selection problems. 

% \begin{table}[ht]
% \centering
% \caption{Summary of known result for different optimization problem.}
% \label{tab:problems}
% \renewcommand{\arraystretch}{1.25}
% \begin{tabular}{@{}lll@{}}
% \toprule
% \textbf{Problem} & \textbf{FPT bound} & \textbf{SOTA exact} \\
% \midrule
% \textsc{Feedback Vertex Set} & $3^k$ (r)~\citep{Cygan15}
%   & $1.6667^n$ (r) \\
% \textsc{Feedback Vertex Set} (det.) & $3.592^k$~\citep{KP14}
%   & $1.7217^n$ \\
% \textsc{Subset Feedback Vertex Set} & $4^k$~\citep{Cygan13}
%   & $1.7500^n$ \\
% \textsc{Group Feedback Vertex Set} & $4^k$~\citep{Cygan13}
%   & $1.7500^n$ \\
% \textsc{FVS in Tournaments} & $1.6181^k$~\citep{KL16}
%   & $1.3820^n$ \\
% $d$-\textsc{Hitting Set} & $d^k$
%   & $(2-\tfrac{1}{d})^n$ (r) \\
% $3$-\textsc{Hitting Set} & $2.076^k$~\citep{Wahlstrom07}
%   & $1.6667^n$ (r) \\
% \textsc{Node Unique Label Cover} & $|\Sigma|^{2k}$~\citep{FLRSV12}
%   & $(2-\tfrac{1}{|\Sigma|^2})^n$ (r) \\
% \textsc{Min-Ones} $d$-\textsc{SAT} & $d^k$
%   & $(2-\tfrac{1}{d})^n$ (r) \\
% \textsc{Min-Ones} $3$-\textsc{SAT} & $2.562^k$~\citep{Kneis06}
%   & $1.6097^n$ (r) \\
% \textsc{Interval Vertex Deletion} & $8^k$~\citep{Cao16}
%   & $1.8750^n$ \\
% \textsc{Proper Interval Vertex Deletion} & $6^k$~\citep{Cao15}
%   & $1.8334^n$ \\
% \textsc{Block Graph Vertex Deletion} & $4^k$~\citep{Agrawal16}
%   & $1.7500^n$ \\
% \textsc{Cluster Vertex Deletion} & $1.9102^k$~\citep{Bocker12}
%   & $1.4765^n$ (r) \\
% \textsc{Vertex $(r,\ell)$-Partization} ($r,\ell\leq 2$)
%   & $3.3146^k$~\citep{Baste15}
%   & $1.6984^n$ (NPR) \\
% \bottomrule
% \end{tabular}
% \end{table}

\paragraph{Our results and contributions.}
We develop a learning-augmented approach for exact exponential-time
algorithms based on the noisy predictor model, and prove two main
results.

Our first result concerns the general case where no \classFPT subroutine is available. We give a prediction-guided algorithm that solves any $\Phi$-\textsc{Subset} problem in time $(2 - \Omega(\varepsilon^2))^n
\cdot n^{\Oh(1)}$, assuming only an oracle for feasibility checking.
This strictly improves upon the trivial $2^n$ brute-force bound for
any $\varepsilon > 0$, and in particular breaks the running-time
SETH barrier.

Our second result concerns the MLS setting. We integrate the noisy
predictor into the MLS framework and show that for any problem
admitting a $c^k$-time \classFPT algorithm for $\Phi$-\textsc{Extension}, the bound of $(2 - \frac{1}{c})^n$ of \citet{MLS} is strictly improved for any $\varepsilon > 0$  to $(2 - \frac{1}{c}-\Omega_c(\varepsilon^2))^n$, where $\Omega_c$ hides a constant greater than $0$ dependent on $c$. This approach applies to any problem that admits a fixed-parameter tractable extension algorithm, and thus simultaneously improves the state-of-the-art exact algorithms for \textsc{Feedback Vertex Set} (randomized %~\citep{Cygan15} 
and
deterministic) %~\citep{KP14}), 
\textsc{Subset Feedback Vertex
Set}, %~\citep{Cygan13},
\textsc{Group Feedback Vertex
Set}, %~\citep{Cygan13}, 
\textsc{Feedback Vertex Set in
Tournaments}, %~\citep{KL16}, 
$d$-\textsc{Hitting Set}, 
$3$-\textsc{Hitting Set}, %~\citep{Wahlstrom07}, 
\textsc{Node Unique
Label Cover}, %~\citep{FLRSV12}, 
\textsc{Min-Ones} $d$-\textsc{SAT}, 
\textsc{Min-Ones} $3$-\textsc{SAT}, %~\citep{Kneis06}, 
\textsc{Interval
Vertex Deletion}, %~\citep{Cao16},
\textsc{Proper Interval Vertex
Deletion}, %~\citep{Cao15}, 
\textsc{Block Graph Vertex
Deletion}, %~\citep{Agrawal16}, 
\textsc{Cluster Vertex
Deletion}, %~\citep{Bocker12},
and \textsc{Vertex}
$(r,\ell)$-\textsc{Partization} for $r,\ell \leq 2$.%~\citep{Baste15}.

In each case, the improvement is by reducing the  factor by at least
$\Omega_c(\varepsilon^2)$ in the base of the exponent.

Both results hold under pairwise independence of prediction errors.
Finally, when the prediction quality $\varepsilon$ is unknown, we show that under mutual independence the same asymptotic running times are achievable in expectation, by running the algorithm with geometrically decreasing guesses of $\varepsilon$.

\paragraph{Related work.}
The learning-augmented algorithms framework was introduced by \citet{LV18} for the caching problem and by \citet{Pur18} for ski rental and scheduling.
These works established the central desiderata of \emph{consistency} (near-optimal performance when predictions are correct) and \emph{robustness} (worst-case guarantees when predictions are adversarial), and the consistency--robustness tradeoff has since become the standard evaluation criterion for LAA. 
The paradigm has been extended to dynamic power management~\citep{Ant21}, where authors derive tight ski rental bounds as a core ingredient; to the Bahncard problem~\citep{Zhao24}; and to mechanism design~\citep{Chr24}. 
\citet{Elias24} depart from the black-box prediction paradigm by integrating the learning of the predictor directly into the algorithmic task, obtaining improved guarantees for scheduling and caching. 
\citet{El24} demonstrate that Pareto-optimal consistency---robustness tradeoffs can be extremely brittle under small prediction errors, and propose a smoothness-enforced framework to remedy this. 
\citet{Li23} study learning-augmented algorithms for time-varying MDPs where the advice consists of $Q$-values rather than black-box recommendations, showing that structural information about the predictor enables stronger guarantees. A comprehensive survey of the area is given by \citet{MV22}.

\emph{Predictions for \classNP-hard offline problems.}
A more recent and directly relevant thread applies predictions to offline combinatorial optimisation, aiming to circumvent hardness-of-approximation barriers. 
\citet{GPS22} introduce $\varepsilon$-accurate predictions and show that even mildly accurate predictions can break competitive ratio lower bounds for online algorithms. 
The landmark work of \citet{CDGLP24} establishes an $\varepsilon$-prediction framework for \classNP-hard offline problems: element-wise membership predictions correct with probability $\frac{1}{2} + \varepsilon$ suffice to beat classical approximation hardness for \textsc{MaxCut} and constraint satisfaction problems. 
Subsequent work extends these ideas to \textsc{Vertex Cover}, \textsc{Set Cover}, and \textsc{Maximum Independent Set} via edge-based predictions~\citep{Aam25}, to combinatorial optimisation with predictions more broadly~\citep{AEP24}, and to dense instances of \classNP-hard problems~\citep{BEX24}. 
\citet{GhMM25} study constraint satisfaction problems with advice.

\paragraph{Organization.}
Section~\ref{sec:search} presents both algorithms and their analysis,
including the proofs of the two main theorems. Section~\ref{sec:quality}
addresses the setting where the prediction quality $\varepsilon$ is
unknown.

\section{Prediction-Guided Search Space}\label{sec:search}

In this section, we show how predictions help reducing the combinatorial search space and speeding up exhaustive search as well as more sophisticated methods.

\subsection{Exhaustive Search with Predictions}\label{subsec:exhaustive}
We start with a general subset selection problem from the class \classNP.
We can consider this as a problem where we are given an integer $n$ and a polynomial-time algorithm $\mathcal{A}$.
It takes a subset of $[n]$ as an input and reports $1$ or $0$.
Our goal is to find any subset $S\subset [n]$ with $\mathcal{A}(S)=1$.
Without any additional insight, this problem is trivially resolvable in $2^n\cdot \polyn$ running time by enumerating all $S\subset [n]$ and evaluating $\mathcal{A}(S)$ in polynomial time.

In the noisy prediction model, the predictor makes predictions about some fixed solution $S^*\subset [n]$ with $\mathcal{A}(S^*)=1$.
For each element of $i\in [n]$, when we ask the predictor \emph{``Does $S^*$ contain $i$?''} it gives the correct answer with probability at least $\frac{1}{2}+\varepsilon$.
By asking this question once for each separate element $i\in[n]$, we construct a set $\widetilde{S}^*$ consisting of all ``yes''-elements.
On average, at least $\frac{1}{2}+\varepsilon$ fraction of guesses were correct.
Therefore, the average symmetric difference between $S^*$ and the prediction set is $\mathbb{E}\left[|S^*\triangle\widetilde{S}^*|\right]\le (\frac{1}{2}-\varepsilon)\cdot n$, and is separated from $\frac{n}{2}$ with $\varepsilon n$ gap.
This suggests that we could use $\widetilde{S}^*$ as a ``hot start'' set and enumerate all symmetric difference sets $D\subset [n]$ of size at most $(\frac{1}{2}-\Omega(\varepsilon))\cdot n$.
If $\mathcal{A}(\widetilde{S}^*\triangle D)$ equals $1$, the solution is found.

The intuition above raises two concerns.
First of them is why does this strategy (enumerating subsets of slightly limited size) even provide any substantial improvement over $2^n\cdot\polyn$.
The answer to this question is a well-known upper bound on binomial coefficients $\binom{n}{\lfloor \gamma n \rfloor}\le 2^{H(\gamma)\cdot n}$, where \[H(\gamma)=-\gamma\cdot \log_2\gamma-(1-\gamma)\cdot \log_2(1-\gamma)\] is the Shannon's binary enthropy function, and the inequality holds for any $\gamma\in(0,\frac{1}{2}]$.
When $\gamma$ is separated from $\frac{1}{2}$, $2^{H(\gamma)}$ is separated from $2$.
We formalize this in the following lemma.

\begin{lemma}\label{lemma:enthropy-near-half}
    For every $x \in [0, \frac{1}{2}]$, $2^{H(\frac{1}{2} - x)} \le 2 - \frac{11}{3}x^2$.
\end{lemma}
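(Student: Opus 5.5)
The plan is to pass to the variable $y = 2x \in [0,1]$ and use the standard Taylor expansion of the binary entropy around $\tfrac12$. Writing $p = \tfrac12 - x = \tfrac{1-y}{2}$, one has
\[
  H\!\left(\tfrac{1-y}{2}\right) \;=\; 1 - \frac{1}{\ln 2}\, g(y), \qquad g(y) = \sum_{k\ge 1} \frac{y^{2k}}{2k(2k-1)},
\]
and $g'(y) = \operatorname{artanh}(y)$, $g(0)=0$, $g(1)=\ln 2$. Consequently $2^{H(\frac12 - x)} = 2 e^{-g(y)}$. The right-hand side is $2 - \tfrac{11}{3}x^2 = 2\bigl(1 - a y^2\bigr)$ with $a = \tfrac{11}{24}$. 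So the lemma is equivalent to
\[
  \varphi(y) \;:=\; g(y) + \ln\bigl(1 - a y^2\bigr) \;\ge\; 0 \qquad\text{for all } y\in[0,1].
\]

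A naive attempt is to bound $g(y)\ge y^2/2$ and $e^{-t}\le 1-t+t^2/2$. This gives only $e^{-g}\le 1-\tfrac{3}{8}y^2$ at the worst point, which is weaker than $a=\tfrac{11}{24}$. A coefficient-wise comparison of the two power series also fails already at $k=2$ (since $\tfrac1{12} < \tfrac{a^2}{2}$). So I will instead argue via the shape of $\varphi$. First, the endpoints are fine. We have $\varphi(0)=0$. Also $\varphi(1) = \ln 2 + \ln\tfrac{13}{24} = \ln\tfrac{13}{12} > 0$, and near $0$ we have $\varphi(y) = \bigl(\tfrac12 - a\bigr)y^2 + O(y^4) = \tfrac{1}{24}y^2 + O(y^4) \ge 0$.

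The key step is the sign analysis of
\[
  \varphi'(y) = \operatorname{artanh}(y) - \frac{2ay}{1-ay^2}.
\]
I would show that $\varphi'$ changes sign at most twice on $(0,1)$, going $+,-,+$ or just $+$. Then $\varphi$ is nondecreasing, then possibly decreasing, then increasing. Hence its minimum on $[0,1]$ is attained either at $0$ or at the unique interior critical point $y^\ast$ where $\varphi$ turns from decreasing to increasing.

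To control this, I would write $\varphi'(y) = y\,\psi(y)$ with $\psi(y) = \frac{\operatorname{artanh} y}{y} - \frac{2a}{1-ay^2}$. Both terms of $\psi$ are power series in $y^2$ with positive coefficients, namely $\sum y^{2j}/(2j+1)$ and $2a\sum a^j y^{2j}$. Their difference has coefficient sequence $\frac1{2j+1} - 2a^{j+1}$. This is positive at $j=0$, negative for $j=1,2,\dots$ up to some point, and eventually positive again because $a<1$. By Descartes' rule of signs for power series, $\psi$ has at most two positive roots in $y^2$, which gives the claimed sign pattern.

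The main obstacle is the remaining check that $\varphi(y^\ast)\ge 0$ at the interior local minimum. This is a genuinely numerical fact, since the constant $\tfrac{11}{3}$ is presumably close to optimal. I would handle it by locating the sign changes of $\psi$ on a coarse grid of $y$-values. On the interval where $\varphi$ decreases, I would lower-bound $\varphi$ by its value at the right end of that interval minus nothing, using monotonicity. More concretely, I would evaluate $\varphi$ at a few explicit rational points, using truncated series with explicit tail bounds for $g$. Monotonicity of $\varphi$ between consecutive grid points, read off from the sign of $\psi$ at the grid points, then yields $\varphi\ge 0$ on all of $[0,1]$. If the margin turns out tight, I would refine the grid near $y^\ast$.
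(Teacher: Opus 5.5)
Your proposal is correct and takes a genuinely different route from the paper. The paper works with the Taylor series of $2^{H(\frac12-x)}$ itself, $2-4x^2+\frac43x^4-\frac85x^6-\cdots$. It truncates after the $x^4$ term to get $2^{H(\frac12-x)}\le 2-4x^2+\frac43x^4$, and then uses $x^4\le x^2/4$. This is short and explains where $\frac{11}{3}$ comes from: the two bounds meet at $x=\frac12$. However, the truncation needs the signs of \emph{all} higher coefficients of the composite series $e^{-g}$. The alternating-sign justification given there does not hold: the $x^6$ and $x^8$ coefficients, $-\frac85$ and $-\frac{892}{315}$, are both negative. The truncated inequality is still true, but not for that reason. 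By passing to $\varphi=g+\ln(1-ay^2)$, you replace this with the series $\psi(y)=\sum_{j\ge0}c_jy^{2j}$, whose coefficients $c_j=\frac1{2j+1}-2a^{j+1}$ are explicit. So your sign analysis is elementary and rigorous.

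You can push your own computation further and drop the last two steps entirely. The negative stretch of $(c_j)$ is only $j=1$. We have $c_0=\frac1{12}$, $c_1=-\frac{25}{288}$ and $c_2=\frac{257}{34560}>0$. Moreover, $(2j+1)a^{j+1}$ decreases for $j\ge1$, since the consecutive ratio is $a\frac{2j+3}{2j+1}\le\frac{55}{72}$, so $c_j>0$ for all $j\ge2$. Now write $u=y^2\in(0,1)$. If $u\le\frac{24}{25}$, then $c_0+c_1u\ge0$ and every other term is positive. If $u>\frac{24}{25}$, then $c_0+c_1u\ge-\frac1{288}$, while $c_3u^3\ge c_3(\frac{24}{25})^3>0.048$. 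Hence $\psi>0$ on $(0,1)$, so $\varphi$ is increasing and $\varphi\ge\varphi(0)=0$. There is no interior critical point, and you need neither Descartes' rule nor a numerical grid.

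This matters because the grid step as you described it was not rigorous. The sign of $\psi$ at grid points does not control its sign between them, since both potential zeros could lie in a single cell. You would have needed a cell-wise lower bound, e.g.\ from the monotonicity of $\frac{\operatorname{artanh}y}{y}$ and $\frac{2a}{1-ay^2}$. Finally, $\frac{11}{3}$ is not near-optimal: the best constant is about $3.85$, which is why such simple estimates suffice.
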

\begin{proof}
We want to estimate how far $2^{H(\frac{1}{2} - x)}$ stands from $2$ with a small-degree polynomial in $x$ on the interval $[0, \frac{1}{2}]$.
As we can see in \Cref{fig:graph_h(1/2-x)}, this gap is close to $4x^2$ on that interval. 
In the following, we estimate it from below by $\frac{11}{3} \cdot x^2$.

\begin{figure}[ht]
    \centering
    \includegraphics[width=0.9\linewidth]{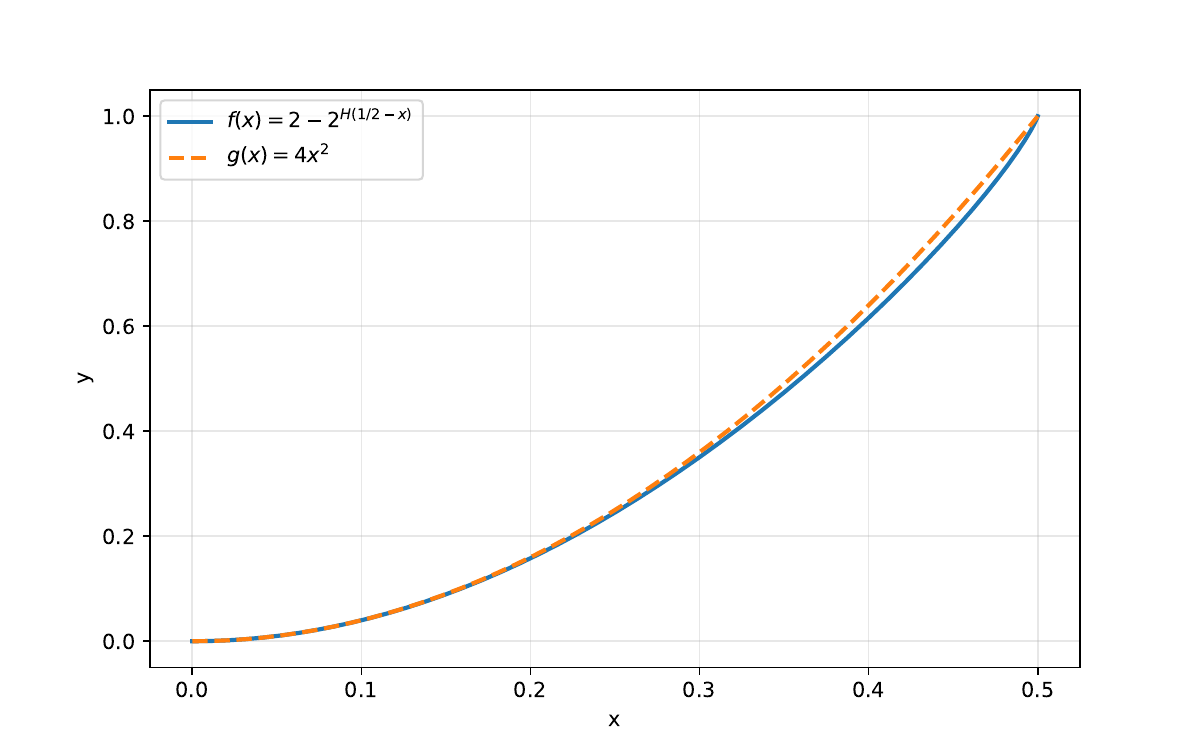}
    \caption{$2 - 2^{H(\frac{1}{2}-x)}$ and $4x^2$ on $[0, \frac{1}{2}]$}
    \label{fig:graph_h(1/2-x)}
\end{figure}

    The Taylor series of the $H(p)$ at $\frac{1}{2}$ is
    $$H(p) = 1 - \frac{1}{2 \ln 2}\sum\limits_{n=1}^{\infty} \frac{(1 - 2p)^{2n}}{n (2n - 1)} \; ,$$
    which converges to $H(p)$ for every $p \in [0, 1]$.
    By setting $p := \frac{1}{2} - x$, we get
    $$H\left(\frac{1}{2} - x\right) = 1 - \frac{1}{2 \ln 2}\sum\limits_{n=1}^{\infty} \frac{(2x)^{2n}}{n (2n - 1)}\; .$$    
    Hence,
    \[
    2^{H(\frac{1}{2} - x)} =
    2^{1 -\frac{1}{2 \ln 2}\sum\limits_{n=1}^{\infty} \frac{(2x)^{2n}}{n (2n - 1)}} =
    2 \cdot e^{-\frac{1}{2}\sum\limits_{n=1}^{\infty} \frac{(2x)^{2n}}{n (2n - 1)}} = 
    2 \cdot e^{- 2x^2 - \frac{4}{3} x^4 - \frac{32}{15} x^6 - \dots} .
    \]
    The Taylor series for $e^t$ at $0$ is 
    $$e^t = \sum\limits_{n=0}^{\infty} \frac{t^i}{n!} ,$$
    which converges to $e^t$ for every $t$.
    We now substitute $t = - 2x^2 - \frac{4}{3} x^4 - \frac{32}{15} x^6 - \dots$ and get
    $$2^{H(\frac{1}{2}-x)} = 2 - 4x^2 + \frac{4}{3} x^4 - \frac{8}{5} x^6 + \dots \; .$$
    This series converges for $x\in[0,1/2]$ because $2^{H(1/2-x)}$ is analytic.  
    Moreover, for $0 \le x \le 1/2$ the coefficients alternate in sign and their absolute values decrease after the first few terms.  
    Consequently, for any $x$ in this interval, the partial sum up to the $x^4$ term provides an upper bound:
    \[
    2^{H(\frac{1}{2} - x)} \le 2 - 4x^2 + \frac{4}{3}x^4.
    \]
    Since $x^2 \le 1/4$ on $[0,1/2]$, we have $x^4 \le x^2/4$, and therefore
    \[
    2^{H(\frac{1}{2} - x)} \le 2 - 4x^2 + \frac{4}{3}\cdot\frac{x^2}{4} = 2 - \frac{11}{3}\,x^2.
    \]
    This completes the proof.
\end{proof}

Consequently, enumerating symmetric difference sets accordingly to above takes at most $(2-\Omega(\varepsilon^2))^n\cdot\polyn$ time.
Second concern is related to the probability of that the prediction is accurate enough that this strategy can be applied. 
If the guesses are pairwise independent, the probability that the size of $D$ is too far from the average is inversely proportional to $n$.
Under mutual independence, this probability drops exponentially in $n$.
%\todo[inline]{probably note here that we cannot avoid independence since the guesses are garbage?}
\begin{lemma}\label{lemma:sum_near_expectation}
Let $\varepsilon \in (0, \frac{1}{2})$,  
and let $P_1, \dots, P_n \in \{0,1\}$ be binary random variables such that for every $i \in [n]$,
$\Pr \left[P_i = 0 \right] \le \frac{1}{2} - \varepsilon$.
Then, for $X = \sum_{i=1}^n P_i$,
\[
\Pr\left[X \le \left(\tfrac{1}{2} + \tfrac{\varepsilon}{2}\right) \cdot n\right] \le
\begin{cases}
 \tfrac{1}{\varepsilon^2 n}, & \text{if } P_1,\dots,P_n \text{ are \emph{pairwise} independent}, \\[1.2em]
e^{-\varepsilon^2 n / 2},   & \text{if } P_1,\dots,P_n \text{ are \emph{mutually} independent}.
\end{cases}
\]
\end{lemma}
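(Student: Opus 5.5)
The plan is to compare $X$ with its mean and bound the deviation using only the independence each case allows. By linearity of expectation, $\mathbb{E}[X] = \sum_i \Pr[P_i = 1] \ge (\frac{1}{2}+\varepsilon) n$. So the event $X \le (\frac{1}{2}+\frac{\varepsilon}{2})n$ forces $X \le \mathbb{E}[X] - \frac{\varepsilon}{2} n$, a downward deviation of at least $\frac{\varepsilon n}{2}$. Write $p_i = \Pr[P_i=1]$; both cases then reduce to bounding $\Pr[X - \mathbb{E}[X] \le -\frac{\varepsilon n}{2}]$.

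\emph{Pairwise case.} Pairwise independence makes the covariances vanish, so $\mathrm{Var}[X] = \sum_i p_i(1-p_i)$. Since $p_i \ge \frac12+\varepsilon$, each term is at most $(\frac12+\varepsilon)(\frac12-\varepsilon) = \frac14-\varepsilon^2 \le \frac14$. Chebyshev's inequality then gives
\[
\Pr\Bigl[|X-\mathbb{E}[X]| \ge \tfrac{\varepsilon n}{2}\Bigr] \le \frac{n/4}{\varepsilon^2 n^2/4} = \frac{1}{\varepsilon^2 n},
\]
which is exactly the claimed bound. This step is routine.

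\emph{Mutual case.} Here I use the one-sided Hoeffding inequality for independent variables in $[0,1]$:
\[
\Pr\bigl[X - \mathbb{E}[X] \le -t\bigr] \le e^{-2t^2/n}.
\]
With $t = \frac{\varepsilon n}{2}$ this yields $e^{-2\cdot \varepsilon^2 n^2/4 / n} = e^{-\varepsilon^2 n/2}$, again exactly the target.

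The only point needing care is the threshold comparison. The mean is at least $(\frac12+\varepsilon)n$, not equal to it, so I must check that a larger mean only makes the event a larger deviation. This holds because both Chebyshev and Hoeffding are applied with $t = \mathbb{E}[X] - (\frac12+\frac{\varepsilon}{2})n \ge \frac{\varepsilon n}{2}$, and both bounds are monotone in $t$. Hoeffding is cited as a standard result, so I expect no real obstacle here.
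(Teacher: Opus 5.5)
Your proof is correct and follows the paper's argument. You reduce the event to a downward deviation of at least $\frac{\varepsilon n}{2}$ below $\mathbb{E}[X]\ge(\frac12+\varepsilon)n$, then apply Chebyshev with $\mathrm{Var}[X]=\sum_i p_i(1-p_i)\le \frac{n}{4}$ in the pairwise case, and the one-sided Hoeffding bound $e^{-2t^2/n}$ with $t=\frac{\varepsilon n}{2}$ in the mutual case. Working directly with the $P_i$ and using the non-strict deviation $|X-\mathbb{E}[X]|\ge\frac{\varepsilon n}{2}$ matches the lemma's $\le$ event slightly more faithfully than the paper's write-up, which switches to strict inequalities.
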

\begin{proof}
Let $X_i$ be a random variable that signals if the oracle $P_i$ correctly guesses if $i \in S$, so 
$$X_i = \big[P_i = [i \in S] \, \big]  \text{ and } X = \sum\limits_{i=1}^n X_i \; .$$
We know that $\mathbb{E} [X_i] = \Pr [X_i] \ge \frac{1}{2} + \varepsilon$, so $\mathbb{E}[X] \ge (\frac{1}{2} + \varepsilon) \cdot n$.
Hence, 
\[
\Pr\left[X < \Bigl(\frac{1}{2} + \frac{\varepsilon}{2}\Bigr) \cdot n\right] \le 
\Pr\left[X < \mathbb{E} [X] - \frac{\varepsilon n}{2} \right] \le
\Pr\left[\big |X - \mathbb{E} [X] \big| > \frac{\varepsilon n}{2} \right] \;.
\]

We now estimate this probability using the Chebyshev inequality for the case of pairwise independence, and the Chernoff-Hoeffding's inequality for the case of full independence.

By the Chebyshev inequality, 
$$\Pr[|X - \mathbb{E}[X]| > \frac{\varepsilon n}{2}] \le \frac{\Var(X)}{\varepsilon^2 n^2/4} \;.$$

We expand $\Var(X)$ as $\mathbb{E}[X^2] - (\mathbb{E} [X])^2$, and consider those parts separately.
First, 
$$\mathbb{E}[X^2] = \mathbb{E} \left[\sum\limits_{i=1}^n X_i \right]^2 = \sum\limits_{i, j \in [n]} \mathbb{E} [X_i X_j] = 
    \sum\limits_{i=1}^n \mathbb{E} [X_i^2] + \sum\limits_{i \neq j} \mathbb{E} [X_i \cdot X_j] \; .$$
    If $P_i$ are pairwise independent, then $X_i$ are also pairwise independent, so $$\mathbb{E} [X_i X_j] = \mathbb{E} [X_i] \cdot \mathbb{E} [X_j] \text{ \; for $i \neq j$.}$$
    At the same time, since $X_i \in \{0, 1\}$, $\mathbb{E} [X_i^2] = \mathbb{E} [X_i]$.
    That is, 
    $$\mathbb{E} [X^2] = \sum\limits_{i=1}^n \mathbb{E} [X_i] + \sum\limits_{i \neq j} \mathbb{E} [X_i] \cdot \mathbb{E} [X_j] .$$

    Now we expand $\mathbb{E} [X]^2 = \mathbb{E} \left[\sum X_i\right]^2 = 
    \sum\limits_{i = 0}^n \mathbb{E}[X_i]^2 + \sum\limits_{i \neq j} E[X_i] \cdot \mathbb{E}[X_j].$

    Hence,
    $$\Var(X) = \sum\limits_{i=1}^n \!\big(\,\mathbb{E}[X_i] - \mathbb{E}[X_i]^2 \;\!\big) \le \frac{n}{4}\; .$$
    The last inequality holds since the function $x - x^2 \le \frac{1}{4}$ for $x \in [0, 1]$

    Finally, for pairwise independent $P_1, \dots, P_n$, we have
    $$\Pr\left[\sum\limits_{i=1}^n X_i < (\frac{1}{2} + \frac{\varepsilon}{2}) \cdot n\right] \le 
    \Pr\left[\big|\sum X_i - \mathbb{E}\big[\sum\limits_{i=1}^n X_i\big]\big| > \frac{\varepsilon n}{2}\right] \le
    \frac{n/4}{\varepsilon^2 n^2 / 4} = \frac{1}{\varepsilon^2 n} \; ,$$
    so we proved the first part of the lemma.

    We now show the inequality for the case of mutually independence.
    Apply the Chernoff-Hoeffding's inequality for $X_i \in [0,1]$,
    \[
    \Pr \left[\sum X_i \le \mathbb{E} \big[\sum X_i \big] - t \right] \le e^{-2t^2/n} \; ,
    \]
    with $t = \frac{\varepsilon n}{2}$, and get
    \[
    \Pr \big[X < \big(\frac{1}{2} + \frac{\varepsilon}{2}\big) \cdot n \big] \le e^{-\varepsilon^2 n/2} \; ,
    \]
    so we proved the second part of the lemma.
\end{proof}

Combining the strategy with two lemmas above clearly gives us an algorithm that works in $(2-\Omega(\varepsilon)^2)^n\cdot \polyn$ time and finds $S$ with $\mathcal{A}(S)=1$, if the prediction quality is good enough.
The probability that the prediction is not accurate is low when $n>>\frac{1}{\varepsilon^2}$ and tends to zero when $n$ grows.

% \todo[inline]{discussion here that $S$ always exists}
% \todo[inline]{discussion of the susceptibility to bad prediction quality }

\subsection{Monotone Local Search with Predictions}\label{subsec:mls}
Many of the \classNP-hard problems that fit under the general notion of Section~\ref{subsec:exhaustive} are known to admit exact exponential algorithms much more efficient than the exhaustive search.
This is usually attained by the recursive \emph{branching} approach that keeps the search exhaustive but avoids some choices that lead to suboptimal solutions.
Designing a set of branching rules for a certain problem is usually very specific to the problem itself.

\emph{Monotone Local Search} (MLS) is an alternative approach to design of exact exponential algorithms.
To use this framework, one needs to solve a related subset selection problem: given a partial solution $X\subset [n]$, and an integer $k$, find a set $Y\subset[n]\setminus X$ of size at most $k$ such that $X\cup Y$ is a solution, or determines that such $Y$ does not exist.
Designing an algorithm that would work in $c^k\cdot \polyn$ time for constant $c>1$ as small as possible is one of the questions raised by Parameterized Complexity, a very well developed area of algorithm design.
Using MLS, we are able to wrap any such parameterized algorithm up into an exact exponential algorithm for the same problem that works in time $(2-\frac{1}{c})^n\cdot \polyn$.
The work \citep{MLS} that introduced MLS, also demonstrated that this framework improves the best known running times for a list of problems, including those having no faster-than-$2^n$ exact algorithms known before, thanks to the existing advances of Parameterized Complexity.

% \todo[inline]{example of min ones d sat? explain that designing fpt algorithms comes simpler}

In this part of the section, we enhance this powerful framework with predictions.
To understand how it is possible, we have to explain the basic idea of MLS.
Given $\mathcal{B}$, the parameterized algorithm for the extension version of the problem that runs in $c^k\cdot\polyn$ time, we transform it into exact exponential algorithm $\mathcal{E}$.
First of all, $\mathcal{E}$ focuses on finding a solution subset of fixed size $k$.
If $k$ is not known in advance, we can just iterate $k$ in $\{0,1,\ldots, n\}$.
Then, depending on $n$, $k$ and constant $c>1$ that comes from the running time bound of $\mathcal{B}$, the algorithm $\mathcal{E}$ evaluates integer $t$ with $0\le t \le k$.
After that, $\mathcal{E}$ makes a random guess of the partial solution set $X\subset [n]$ of size exactly $t$.
Then, it runs $\mathcal{F}$ on $X$ with parameter $k-t$, aiming to find $Y$ such that $X\cup Y$ is a solution.
Repeating random guesses of $X$ for enough many times guarantees that $\mathcal{E}$ will find a solution with high probability.

If $S^*$ is a solution of size $k$, the probability that a guess of $X$ is correct, that is, $X\subset S^*$, is at least $\frac{\binom{k}{t}}{\binom{n}{t}}$, and $\mathcal{E}$ has to take as many as $\frac{\binom{n}{t}}{\binom{k}{t}}\cdot \polyn$ random samples of $X$ to reach good probability.
Therefore, the running time of $\mathcal{E}$ is upper bounded (up to a polynomial factor) by $\max_{k=0}^n\min_{t=0}^k \frac{\binom{n}{t}}{\binom{k}{t}}\cdot c^{k-t}$, that, in its turn, is upper bounded by $(2-\frac{1}{c})^n$, as shown in \citet{MLS}.
Another great achievement of \citet{MLS} is a construction that allows to avoid randomness in sampling $X$.
In fact, when $n, k$ and $t$ are fixed, we can construct a representative family of subsets deterministically, and the size of this family is almost identical to the number of random samples we have to make to attain good probability.
Then it is enough to take samples of $X$ from this family alone.

\begin{lemma}[Theorem 3.3 in \citep{MLS}]\label{lemma:universal-family}
    There is an algorithm that, given three integers $n\ge k\ge t\ge 0$, constructs a family $\mathcal{S}$ of $t$-element subsets of $[n]$ such that 
    \begin{itemize}
        \item For every $S\subset [n]$ with $|S|=k$, there exists $X\in\mathcal{S}$ such that $X\subset S$;
        \item $\mathcal{S}$ contains at most $\frac{\binom{n}{t}}{\binom{k}{t}}\cdot 2^{o(n)}$ sets;
    \end{itemize}
    $\mathcal{S}$ is called $(n,k,t)$-set-inclusion family and the algorithm works in $|\mathcal{S}|\cdot 2^{o(n)}$ time.
\end{lemma}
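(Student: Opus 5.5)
We would prove \Cref{lemma:universal-family} by derandomising the random-sampling argument of MLS, in the spirit of the classical splitter constructions. Random sampling already shows that such a family exists. Pick $N=\frac{\binom{n}{t}}{\binom{k}{t}}\cdot \ln\binom{n}{k}$ uniformly random $t$-subsets. A fixed $k$-set $S$ contains a given random $t$-subset with probability $p=\binom{k}{t}/\binom{n}{t}$. So $S$ is missed by all $N$ samples with probability $(1-p)^N\le e^{-pN}=\binom{n}{k}^{-1}$, and a union bound over all $k$-sets gives existence with $N\le \frac{\binom{n}{t}}{\binom{k}{t}}\cdot n^{\Oh(1)}$. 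The difficulty is entirely in making this constructive within $|\mathcal{S}|\cdot 2^{o(n)}$ time, since a direct greedy set-cover derandomisation would have to range over all $\binom{n}{k}$ sets $S$.

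To achieve that, we would use a split-and-combine approach. Partition $[n]$ into $b=\lceil n/\ell\rceil$ consecutive blocks of size $\ell=\Theta(\log n)$, or more generally $\ell=\omega(1)$ with $2^{\ell}$ subexponential budget. For each block $B_j$, guess the profile $(k_j,t_j)$ with $\sum k_j=k$ and $\sum t_j=t$. There are at most $(\ell+1)^{2b}=2^{o(n)}$ such profiles when $\ell$ grows slowly, for instance $\ell=\log^2 n$ so that $b\log\ell=o(n)$. For each block and each local pair $(k_j,t_j)$, compute by brute-force greedy set cover a $(\ell,k_j,t_j)$-set-inclusion family $\mathcal{S}_j$. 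The greedy algorithm costs $2^{\Oh(\ell)}$ time per block, which is subexponential overall, and gives size $\frac{\binom{\ell}{t_j}}{\binom{k_j}{t_j}}\cdot \Oh(\ell)$ by the standard $\ln$-approximation of greedy set cover applied to the existence bound above. The family for a profile is the product $\mathcal{S}_1\times\cdots\times\mathcal{S}_b$, taking unions. Let $\mathcal{S}$ be the union over all profiles.

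Correctness is direct. Any $k$-set $S$ induces the sizes $k_j=|S\cap B_j|$, and we choose any $t_j\le k_j$ summing to $t$. Then each block contributes some $X_j\subseteq S\cap B_j$, and $\bigcup_j X_j\subseteq S$ has size exactly $t$.

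The main obstacle is the size bound. For a fixed profile we must show
\[
\prod_j \frac{\binom{\ell}{t_j}}{\binom{k_j}{t_j}}\le \frac{\binom{n}{t}}{\binom{k}{t}}\cdot 2^{o(n)},
\]
after multiplying by the $\Oh(\ell)^b=2^{o(n)}$ overhead. The issue is that a profile chosen arbitrarily could be bad. We would therefore not take all $t$-splits, but for each $(k_j)$ choose the $t_j$ proportionally, $t_j\approx t\cdot k_j/k$, with rounding absorbed into $2^{o(n)}$. We would then use entropy estimates $\binom{m}{r}=2^{mH(r/m)\pm\Oh(\log m)}$. With these, the left side becomes $2^{\sum_j[\ell H(t_j/\ell)-k_jH(t_j/k_j)]+o(n)}$. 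Here $t_j/k_j\approx t/k$ is constant across blocks, so the subtracted terms sum to $kH(t/k)$. By concavity of $H$, $\sum_j \ell H(t_j/\ell)\le nH(t/n)$. The rounding and small-block error terms, with blocks of size $\ell=\omega(1)$, contribute $\Oh(b\log\ell)=o(n)$. This yields exactly $\binom{n}{t}/\binom{k}{t}\cdot 2^{o(n)}$ for each of the $2^{o(n)}$ profiles. The running time is the output size plus the $2^{\Oh(\ell)}\cdot b\cdot 2^{o(n)}$ preprocessing, which is $|\mathcal{S}|\cdot 2^{o(n)}$.
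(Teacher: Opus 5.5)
The paper does not prove this lemma. It is quoted as Theorem~3.3 of \citet{MLS} and used as a black box, so there is no in-paper argument to compare against. Your proposal is a correct, self-contained proof along the standard derandomisation route:
\begin{itemize}
\item near-optimal families on ground sets of size $\ell$, obtained by greedy set cover;
\item products over a fixed partition of $[n]$ into blocks;
\item a union over the intersection profiles $(k_j)_j$, with the split of $t$ fixed as a function of the profile.
\end{itemize}
The size bound goes through as you describe. Weighted Jensen gives $\sum_j \ell_j H(t_j/\ell_j)\le nH(t/n)$, which also handles a shorter last block. Proportional $t_j$ make the denominator $2^{kH(t/k)}$ up to rounding. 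Every loss (number of profiles, greedy factor, rounding, entropy approximation of binomials) is $\ell^{\Oh(b)}\cdot\polyn=2^{o(n)}$ whenever $\ell=\omega(1)$. The $2^{\Oh(\ell)}$ precomputation is $2^{o(n)}$ as long as $\ell=o(n)$.

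A few points to tighten when writing it up.
\begin{itemize}
\item State the construction from the start as a union over $k$-profiles only. Your first description guesses pairs $(k_j,t_j)$ and counts $(\ell+1)^{2b}$ profiles; read literally, that admits bad splits and breaks the size bound. You fix this later, but it belongs in the construction itself.
\item Take $t_j\in\{\lfloor tk_j/k\rfloor,\lceil tk_j/k\rceil\}$ with $\sum_j t_j=t$. Then $t_j\le k_j$ holds automatically. By concavity of $x\mapsto k_jH(x/k_j)$, rounding changes each term $k_jH(t_j/k_j)$ by only $\Oh(\log \ell)$.
\item For greedy, compare with the fractional optimum $\binom{\ell}{t_j}/\binom{k_j}{t_j}$, obtained from uniform weights $1/\binom{k_j}{t_j}$. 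This gives the $1+\ln\binom{\ell-t_j}{k_j-t_j}=\Oh(\ell)$ factor you claim. Going through the integral existence bound instead costs $\Oh(\ell^2)$, which is also harmless.
\end{itemize}

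Finally, you can drop the entropy bookkeeping entirely by using Vandermonde's identity.
\begin{itemize}
\item From $\binom nt=\sum_{(t_j)}\prod_j\binom{\ell_j}{t_j}$ you get $\prod_j\binom{\ell_j}{t_j}\le\binom nt$ for \emph{every} split.
\item The identity $\binom kt=\sum_{(t_j)}\prod_j\binom{k_j}{t_j}$ has at most $\prod_j(k_j+1)\le(\ell+1)^b$ nonzero terms.
\item So the split maximising $\prod_j\binom{k_j}{t_j}$, computable by a simple dynamic programme over the blocks, loses only a factor $(\ell+1)^b=2^{o(n)}$, with no rounding issues at all.
\end{itemize}
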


The detailed discussion of how MLS works is finished.
The main result of our work is an improvement of MLS in presence of a noisy predictor.

\begin{theorem}\label{thm:noisy-mls}
    If \textsc{$\Phi$-Extension} can be solved in time $c^k\cdot\polyn$ and there exists a noisy predictor with accuracy $\frac{1}{2}+\varepsilon$, then \textsc{$\Phi$-Subset} can be solved in time $(2-\frac{1}{c}-\Omega_c(\varepsilon^2))^n\cdot\polyn$. 
\end{theorem}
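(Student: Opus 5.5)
The plan is to keep the MLS skeleton (guess a partial solution $X$, then call the $c^k\cdot\polyn$ extension algorithm with budget $k-|X|$), but to sample $X$ separately inside the two halves of the universe induced by the predictions. Ask the predictor once per element and let $P$ be the set of ``yes''-elements and $N=[n]\setminus P$ the ``no''-elements. Fix an optimal solution $S^*$ with $|S^*|=k$ (iterating over $k$ costs only a polynomial factor), and write $a=|S^*\cap P|$ and $b=|S^*\cap N|$.

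\textbf{Step 1: concentration on both sides.} Apply Lemma~\ref{lemma:sum_near_expectation} separately to the elements of $S^*$ and to the elements of $[n]\setminus S^*$. Each element of $S^*$ lands in $P$ with probability at least $\frac12+\varepsilon$, and each element of $[n]\setminus S^*$ lands in $N$ with probability at least $\frac12+\varepsilon$. Hence, with probability $1-\Oh(\frac{1}{\varepsilon^2 k}+\frac{1}{\varepsilon^2(n-k)})$, we get
\[
a\ge(\tfrac12+\tfrac\varepsilon2)k \quad\text{and}\quad |N|-b\ge(\tfrac12+\tfrac\varepsilon2)(n-k).
\]
I use two separate constraints on purpose. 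A single global accuracy constraint is useless here: a predictor answering ``no'' everywhere satisfies it whenever $k<n/2$, and then gives no speedup. Small cases need separate handling. If $k\le\delta n$ or $n-k\le\delta n$ for a tiny $\delta=\delta(c,\varepsilon)$, plain MLS (or the extension algorithm with $X=\emptyset$) already runs in time $(2-\frac1c-\Omega_c(\varepsilon^2))^n$ for $\delta$ small enough, because the MLS cost at a given $k$ is far below its maximum when $k/n$ is near $0$ or $1$. Otherwise the failure probability is $\Oh(1/(\varepsilon^2\delta n))$, and $\polyn$ independent repetitions of the predictor-plus-search procedure amplify success.

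\textbf{Step 2: the algorithm.} Guess $a$, $b$, $t_1\le a$ and $t_2\le b$; there are only polynomially many choices. Use Lemma~\ref{lemma:universal-family} to build an $(|P|,a,t_1)$-family $\mathcal{S}_1$ on $P$ and an $(|N|,b,t_2)$-family $\mathcal{S}_2$ on $N$. For every $X_1\in\mathcal{S}_1$ and $X_2\in\mathcal{S}_2$, run the extension algorithm on $X_1\cup X_2$ with parameter $k-t_1-t_2$. Correctness follows because some $X_1\subseteq S^*\cap P$ and some $X_2\subseteq S^*\cap N$. Up to $2^{o(n)}$ factors the running time is $f(|P|,a,t_1)\cdot f(|N|,b,t_2)$, where
\[
f(m,a,t)=\frac{\binom{m}{t}}{\binom{a}{t}}\,c^{a-t}.
\]
We minimise over $t_1$ and $t_2$ and must bound the maximum over all admissible $|P|,a,b$.

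\textbf{Step 3: the analysis (main obstacle).} Write $|P|=\rho n$, $a=\alpha\rho n$ and $b=\beta(1-\rho)n$. Then $\min_t f(m,\alpha m,t)=2^{m\varphi(\alpha)+o(m)}$ for the MLS exponent function $\varphi$, and the analysis of \citet{MLS} gives exactly that $\max_{\rho,\alpha,\beta}[\rho\varphi(\alpha)+(1-\rho)\varphi(\beta)]=\log_2(2-\frac1c)$. I would make this a continuous optimisation problem and prove two things.

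\emph{(i) Structure of the unconstrained optimum.} The maximiser is unique up to the symmetric configuration, namely $\alpha=\beta=\alpha^*(c)$. Equivalently, it is the profile in which membership in $S^*$ is independent of the prediction. This matches the MLS bound coming from a single-universe computation, since splitting the universe cannot help when the two sides look alike.

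\emph{(ii) Violation of the constraints.} The constraints of Step 1 force the fraction of $S^*$ inside $P$ to exceed $\rho$ by an $\Omega(\varepsilon)$ margin, so $\alpha-\beta=\Omega(\varepsilon)$. Strict concavity of the objective near the maximiser then turns this $\Omega(\varepsilon)$ displacement into an $\Omega_c(\varepsilon^2)$ loss in the exponent, which yields base $2-\frac1c-\Omega_c(\varepsilon^2)$.

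I expect (ii) to be the hard part, because it requires a quantitative second-order bound that holds uniformly over the compact domain, not just locally. I would try this first: write the objective explicitly via entropy functions, for instance using the closed form of the optimal $t$ from \citet{MLS}, and compute its Hessian at the maximiser. This gives local quadratic decay. Away from the maximiser I would use compactness together with uniqueness of the maximum to get a constant gap, which again dominates $\varepsilon^2$. This is the same mechanism as in Lemma~\ref{lemma:enthropy-near-half}, where a first-order deviation of $\varepsilon$ costs $\Theta(\varepsilon^2)$ in the base.
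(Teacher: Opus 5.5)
Your proposal is correct in outline, but the analysis takes a genuinely different route from the paper's. The algorithmic skeleton is the same as the paper's:
\begin{itemize}
\item split $[n]$ by the predictions;
\item use two separate concentration events from Lemma~\ref{lemma:sum_near_expectation};
\item build set-inclusion families on each side;
\item treat $k$ near $0$ or $n$ separately.
\end{itemize}

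\textbf{The paper's route.} The paper keeps a single total $t$ and only decides how to split it. Lemma~\ref{lemma:partition-of-t} writes the split-to-unsplit ratio as $\Pr[X=t_1]/\Pr[Y=t_1]$ for two hypergeometric variables whose means differ by $\Omega(t\varepsilon(1-k/n))$. It picks $t_1$ as a mode of $Y$ (lower-bounded via Chebyshev) and bounds $\Pr[X\ge t_1]$ by Hoeffding--Serfling, which gives a saving of $e^{-\Omega(\gamma^2\varepsilon^2 t)}$. This per-guessed-element saving is then fed into the MLS bound through Lemma~\ref{lemma:modified-local-search-bound} with $c_1=cA_{\gamma,\varepsilon}$ and $c_2=A_{\gamma,\varepsilon}$. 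The result is the explicit base $1+(1-\frac1c)(1-\frac{\gamma^2\varepsilon^2}{4})$.

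\textbf{Your route.} You optimise $t_1$ and $t_2$ independently, so the cost factorises into two independent MLS instances. The exponent becomes $\rho\varphi(\alpha)+(1-\rho)\varphi(\beta)$, where $\varphi(\alpha)=\alpha\log_2 c$ for $\alpha\le\frac1c$ and $\varphi(\alpha)=H(\alpha)+(1-\alpha)\log_2(1-\frac1c)$ for $\alpha\ge\frac1c$. Since $\varphi$ is concave, Jensen alone shows that splitting never costs more than plain MLS at density $\kappa=k/n$. The gain is the Jensen gap from strict concavity on $[\frac1c,1]$ around $\alpha^*=c/(2c-1)$.

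Your constraint claim checks out. With $p=a/k$ and $q=|P\setminus S^*|/(n-k)$ one gets $\alpha-\beta=\kappa(1-\kappa)(p-q)/(\rho(1-\rho))\ge 4\kappa(1-\kappa)\varepsilon$.

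\textbf{Details to fix when writing it up.}
\begin{itemize}
\item Your compactness step needs $\rho$ bounded away from $0$ and $1$. Otherwise $\{\rho=0\}$ with arbitrary $\alpha$ is also a maximiser, contrary to your uniqueness claim (i). The bound does hold, because $\rho\ge\kappa/2$ and $1-\rho\ge(1-\kappa)/2$. For $\kappa\in[\delta,1-\delta]$ both are at least $\delta/2$, and $\delta$ can be chosen depending on $c$ only.
\item Drop the amplification by repeated predictor queries. The model supplies a single prediction, and the theorem only needs success probability $1-\Oh(1/(\varepsilon^2\delta n))$.
\end{itemize}

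\textbf{Trade-off.} The paper's argument is fully non-asymptotic and yields an explicit constant, though a weak one because of the $\gamma^2$ factor. Yours has three advantages: it is more conceptual, its algorithm is at least as fast (the paper's choice of $(t_1,t_2)$ is among those you minimise over), and it shows the $\varepsilon^2$ comes from the curvature of the MLS exponent. Its constant stays implicit unless you replace the compactness step with the explicit bound $\varphi''=H''\le -4/\ln 2$ on $[\frac1c,1]$.
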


In the statement above, $\Omega_c$ hides a constant that depends on $c$.
Naturally, the closer $2-\frac{1}{c}$ is to $1$, the smaller is the enhancing effect of the noisy predictor.
We should also clarify the randomness in \Cref{thm:noisy-mls}.
While our algorithm is deterministic, it relies on the prediction quality.
Using \Cref{lemma:sum_near_expectation} we will show that this quality is good enough with high probability.

We move on to the main idea of our proof of \Cref{thm:noisy-mls}.
The intuition behind it is that the prediction should help us sample the partial set $X$ not totally at random, but with some substantial bias. 
Note that we would like to still keep the algorithm itself deterministic as the original MLS.
To reach this, let us again ask the predictor once about each particular element in $[n]$ of whether it belongs to the solution $S^*$ or not.
The answers divide the universe $\mathcal{U}=[n]$ into two disjoint sets, which we refer to as $\mathcal{U}_0$ and $\mathcal{U}_1$, the ``no''- and ``yes''-answers correspondingly.
The predictor tends to keep the elements of $S^*$ in $\mathcal{U}_1$, and elements of $\mathcal{U}\setminus S^*$ in $\mathcal{U}_0$.
Therefore, we can expect that $\mathcal{U}_1$ contains substantially more than the half of all elements of $S^*$, and substantially less than the half of all elements outside of $S^*$.
Then, when we sample $X$ of size $t$, we sample a part of $X$ inside $\mathcal{U}_0$ and a part of $X$ inside $\mathcal{U}_1$.
The sizes of the sampled parts deviate from half of $|X|$, that is, we expect that $|X\cap \mathcal{U}_1|\ge (\frac{1}{2}+\Omega(\varepsilon))\cdot t$.
This gap should provide a substantial improvement over $\binom{n}{t}/\binom{k}{t}$.
We formalize this in the following technical lemma that lies at the heart of our algorithm.

\begin{lemma}[Biased sampling,]\label{lemma:partition-of-t}
Let integers $n\ge k\ge t\ge 1$ be given and let $\varepsilon\in(0,\frac{1}{2}]$. Let integers
$k_1\in\{0,1,\dots,k\},$ and $n_1\in\{0,1,\dots,n\}$
satisfy the one-sided conditions\\
$k_1 \ge \Bigl(\tfrac12+\tfrac\varepsilon2\Bigr)k,$ and
$n_1 \le k_1 + (n-k)\Bigl(\tfrac12-\tfrac\varepsilon2\Bigr).
$
Then \[
\frac{\binom{n_1}{t_1}}{\binom{k_1}{t_1}}\cdot\frac{\binom{n_0}{t_0}}{\binom{k_0}{t_0}}\le \frac{\binom{n}{t}}{\binom{k}{t}}\cdot \tfrac{20}{3}\cdot \sqrt{t}\cdot \sqrt{e}^{\!-t\varepsilon^2\Bigl(1-\frac{k}{n}\Bigr)^2},
\]
where $n_0=n-n_1$, $k_0=k-k_1$, $t_0=t-t_1$.
\end{lemma}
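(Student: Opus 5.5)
The plan is to choose $t_1$ explicitly and then read both sides of the inequality as hypergeometric probabilities. Since the statement leaves $t_1$ free (and $t_0=t-t_1$), I will take $t_1$ to be a mode of the hypergeometric distribution $\mathrm{Hyp}(k,k_1,t)$. This is the law of the number of elements of a fixed $k_1$-subset that land in a uniformly random $t$-subset of a $k$-element set. Define
\[
h_k(j)=\frac{\binom{k_1}{j}\binom{k_0}{t-j}}{\binom{k}{t}},\qquad h_n(j)=\frac{\binom{n_1}{j}\binom{n_0}{t-j}}{\binom{n}{t}} .
\]
Dividing the desired inequality by $\binom{n}{t}/\binom{k}{t}$, it becomes
\[
\frac{h_n(t_1)}{h_k(t_1)}\le \tfrac{20}{3}\sqrt t\,\sqrt e^{\,-t\varepsilon^2(1-k/n)^2}.
\]
I will therefore bound $h_k(t_1)$ from below and $h_n(t_1)$ from above.

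\emph{Lower bound on $h_k(t_1)$.} I will use the variance computation from \Cref{lemma:sum_near_expectation}, or the standard fact that the variance of $\mathrm{Hyp}(k,k_1,t)$ is at most $t/4$. By Chebyshev, at least $3/4$ of the mass lies within $2\cdot\tfrac{\sqrt t}{2}\cdot 2=2\sqrt t$ of the mean $tk_1/k$. That window contains at most $4\sqrt t+1\le 5\sqrt t$ integers, so the mode satisfies $h_k(t_1)\ge \frac{3}{20\sqrt t}$. This is exactly where the constant $\tfrac{20}{3}\sqrt t$ comes from. I will also use the standard fact that a mode lies within distance $1$ of the mean, so $t_1\ge tk_1/k-1$.

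\emph{Upper bound on $h_n(t_1)$.} $h_n(t_1)$ is a point probability of $\mathrm{Hyp}(n,n_1,t)$, whose mean is $tn_1/n$. From the hypotheses,
\[
\frac{k_1}{k}-\frac{n_1}{n}\ \ge\ k_1\Bigl(\frac1k-\frac1n\Bigr)-\Bigl(1-\frac kn\Bigr)\Bigl(\frac12-\frac\varepsilon2\Bigr)=\Bigl(1-\frac kn\Bigr)\Bigl(\frac{k_1}{k}-\frac12+\frac\varepsilon2\Bigr)\ \ge\ \varepsilon\Bigl(1-\frac kn\Bigr).
\]
Hence $t_1\ge tn_1/n+t\delta$ with $\delta:=\varepsilon(1-k/n)-1/t$. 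Hoeffding's inequality holds for sampling without replacement, so whenever $\delta\ge0$,
\[
h_n(t_1)\le \Pr[\mathrm{Hyp}\ge t_1]\le e^{-2t\delta^2}.
\]

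The main obstacle is the rounding loss $1/t$ in $\delta$; I will handle it by a case split.
\begin{itemize}
\item If $t\varepsilon(1-k/n)\ge 2$, then $\delta\ge \varepsilon(1-k/n)/2$. This gives $e^{-2t\delta^2}\le e^{-t\varepsilon^2(1-k/n)^2/2}$, which is precisely the $\sqrt e^{\,\cdot}$ factor.
\item Otherwise $t\varepsilon^2(1-k/n)^2<2\varepsilon\le 1$. Then the right-hand factor $\sqrt e^{\,-t\varepsilon^2(1-k/n)^2}$ is at least $e^{-1/2}$, so I only need the weaker bound $h_n(t_1)\le 1$ together with $h_k(t_1)\ge 3/(20\sqrt t)$. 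This may cost a factor $\sqrt e$; I expect to recover it by tightening the Chebyshev window slightly, or by using $\tfrac{1}{\varepsilon^2}$-type slack. Getting these constants exactly right is the fiddly part I expect to have to adjust.
\end{itemize}

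Multiplying the two bounds, $h_n(t_1)\le e^{-t\varepsilon^2(1-k/n)^2/2}$ (or $h_n(t_1)\le 1$ in the second case) and $1/h_k(t_1)\le \tfrac{20}{3}\sqrt t$, gives the claim.
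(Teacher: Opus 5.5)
Your proposal follows the paper's proof essentially step for step: $t_1$ is a mode of the $k$-population hypergeometric, Chebyshev bounds its point mass from below, Hoeffding--Serfling bounds the tail of the $n$-population hypergeometric, the mode lies within $1$ of the mean, and the case split is the same one, at $t\varepsilon(1-k/n)\ge 2$.

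The constant issue you left open in your second case goes away once you correct the window radius, which should be $\sqrt t$ rather than $2\sqrt t$. With $\mathrm{Var}\le t/4$ you have $2\sigma\le\sqrt t$, so the window contains at most $2\sqrt t+1\le 3\sqrt t$ integers, and hence $h_k(t_1)\ge \frac{1}{4\sqrt t}$. In that case $t\varepsilon^2(1-k/n)^2\le 2\varepsilon(1-k/n)\le 1$, so it suffices that $4\le\frac{20}{3}e^{-1/2}\approx 4.04$, which holds. This is slightly sharper than the paper, which bounds the variance only by $t$ and handles its $\Delta<2$ case only after enlarging the prefactor.
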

In the statement above, $n_i$ corresponds to $|\mathcal{U}_i|$ and $k_i$ corresponds to $|X\cap \mathcal{U}_i|$.
The conditions imposed on $n_1$ and $k_1$ comes from the discussion before that.
Essentially, \Cref{lemma:partition-of-t} gives us an exponential (in $t$) boost over simple sampling of $X$ over a single $(n,k,t)$-set-inclusion family.
Instead we will sample $X_i$ from a $(n_i,k_i,t_i)$-set-inclusion family for each $i\in\{0,1\}$ separately.

\begin{proof}[Proof of \Cref{lemma:partition-of-t}]
Let integers $n\ge k\ge t\ge 1$ be given and let $\varepsilon\in(0,1/4]$ (we use $\varepsilon$ instead of $\varepsilon/2$ for convenience). Let integers
\[
k_1\in\{0,1,\dots,k\},\qquad n_1\in\{0,1,\dots,n\}
\]
satisfy the one-sided conditions
\[
k_1 \ge \Bigl(\tfrac12+\varepsilon\Bigr)k,
\qquad
n_1 \le k_1 + (n-k)\Bigl(\tfrac12-\varepsilon\Bigr).
\]
For each integer $t_1\in\{0,1,\dots,t\}$ define
\[
R(t_1):=\frac{\binom{k}{k_1}\binom{n-t}{\,n_1-t_1\,}}{\binom{n}{n_1}\binom{k-t}{\,k_1-t_1\,}}=\frac{\binom{n_1}{t_1}}{\binom{k_1}{t_1}}\cdot\frac{\binom{n_0}{t-t_1}}{\binom{k_0}{t-t_1}}: \frac{\binom{n}{t}}{\binom{k}{t}},
\]
with the convention $R(t_1)=0$ if $\binom{k-t}{k_1-t_1}=0$.
Then we need to show that there exists $t_1\in\{0,1,\dots,t\}$ such that
\[
R(t_1)\le \frac{20}{3}\sqrt{t}\,\exp\!\Bigl(-2t\varepsilon^2\Bigl(1-\frac{k}{n}\Bigr)^2\Bigr).
\]

Fix any $t_1\in\{0,1,\dots,t\}$ such that $\binom{k-t}{k_1-t_1}>0$; otherwise $R(t_1)=0$ and the bound is immediate. Multiply and divide by $\binom{t}{t_1}$:
\[
R(t_1)=
\frac{\binom{t}{t_1}\binom{n-t}{n_1-t_1}/\binom{n}{n_1}}
     {\binom{t}{t_1}\binom{k-t}{k_1-t_1}/\binom{k}{k_1}}.
\]
Define random variables
\[
X\sim \mathrm{Hypergeo}(N=n,K=t,m=n_1),\qquad
Y\sim \mathrm{Hypergeo}(N=k,K=t,m=k_1),
\]
so that
\[
\Pr[X=x]=\frac{\binom{t}{x}\binom{n-t}{n_1-x}}{\binom{n}{n_1}},
\qquad
\Pr[Y=y]=\frac{\binom{t}{y}\binom{k-t}{k_1-y}}{\binom{k}{k_1}}.
\]
Then for all admissible $t_1$,
\[
R(t_1)=\frac{\Pr[X=t_1]}{\Pr[Y=t_1]}.
\]

Let $\mu_X:=\Ebb X$ and $\mu_Y:=\Ebb Y$. For the hypergeometric mean,
$\mu_X = \frac{t}{n}\,n_1,\ 
\mu_Y =\frac{t}{k}\,k_1.$
Using the one-sided constraint on $n_1$ we obtain
\[
\mu_X \le \frac{t}{n}\Bigl(k_1+(n-k)\bigl(\tfrac12-\varepsilon\bigr)\Bigr)
= t\Bigl(\frac{k_1}{n}+\Bigl(1-\frac{k}{n}\Bigr)\bigl(\tfrac12-\varepsilon\bigr)\Bigr).
\]
Therefore
\begin{align*}
\mu_Y-\mu_X \ge t\Bigl(\frac{k_1}{k}-\frac{k_1}{n}-\Bigl(1-\frac{k}{n}\Bigr)\bigl(\tfrac12-\varepsilon\bigr)\Bigr) = t\Bigl(1-\frac{k}{n}\Bigr)\Bigl(\frac{k_1}{k}-\bigl(\tfrac12-\varepsilon\bigr)\Bigr).
\end{align*}
Using $k_1/k\ge \tfrac12+\varepsilon$ yields
\[
\mu_Y-\mu_X \ge t\Bigl(1-\frac{k}{n}\Bigr)\Bigl(\bigl(\tfrac12+\varepsilon\bigr)-\bigl(\tfrac12-\varepsilon\bigr)\Bigr)
=2t\varepsilon\Bigl(1-\frac{k}{n}\Bigr).
\]
Denote
\[
\Delta:=2t\varepsilon\Bigl(1-\frac{k}{n}\Bigr),
\qquad\text{so that}\qquad
\mu_Y-\mu_X \ge \Delta.
\]

Now choose $t_1$ to be a mode of $Y$, i.e. an integer maximizing $\Pr[Y=y]$ over $y\in\{0,1,\dots,t\}$.
We lower bound $\Pr[Y=t_1]$ via variance and Chebyshev.

The variance of a hypergeometric variable satisfies
\[
\Var(Y)=k_1\cdot \frac{t}{k}\Bigl(1-\frac{t}{k}\Bigr)\cdot\frac{k-k_1}{k-1}\le t,
\]
since $1-\frac{t}{k}\le 1$, $\frac{k-k_1}{k-1}\le 1$, and $k_1\cdot\frac{t}{k}\le t$.
Let $\sigma:=\sqrt{\Var(Y)}\le\sqrt{t}$. By Chebyshev,
\[
\Pr[|Y-\mu_Y|\ge 2\sigma]\le \frac14,
\qquad\text{hence}\qquad
\Pr[|Y-\mu_Y|\le 2\sigma]\ge \frac34.
\]
The interval $\{y\in\mathbb{Z}:|y-\mu_Y|\le 2\sigma\}$ contains at most $4\sigma+1\le 4\sqrt{t}+1\le 5\sqrt{t}$ integers for $t\ge 1$.
Therefore, since $t_1$ maximizes the point mass,
\[
\Pr[Y=t_1]\ge \frac{3/4}{5\sqrt{t}}=\frac{3}{20}\cdot\frac1{\sqrt{t}}.
\]
Thus
\begin{equation}\label{eq:lowerPY_new}
\Pr[Y=t_1]\ge \frac{3}{20}\cdot\frac1{\sqrt{t}}.
\end{equation}

Next we upper bound $\Pr[X=t_1]$ by a right-tail bound.
First note $\Pr(X=t_1)\le \Pr[X\ge t_1]$.
We use the Hoeffding--Serfling inequality for sampling without replacement: if $X\sim\mathrm{Hypergeo}(N=n,K=t,m=n_1)$ and $\mu_X=\Ebb X$, then for any $u>0$,
\begin{equation}\label{eq:hoeffding_serfling_new}
\Pr[X\ge \mu_X+u]\le \ \!\Bigl(-\frac{2u^2}{t}\Bigr).
\end{equation}
It remains to lower bound $t_1-\mu_X$.
Since the hypergeometric pmf is log-concave and hence unimodal, every mode differs from the mean by at most $1$; in particular
$t_1\ge \mu_Y-1.$
Therefore
\[
t_1-\mu_X \ge (\mu_Y-1)-\mu_X \ge (\mu_Y-\mu_X)-1 \ge \Delta-1.
\]
If $\Delta\ge 2$, then $t_1-\mu_X\ge \Delta/2$, and applying \eqref{eq:hoeffding_serfling_new} with $u=t_1-\mu_X$ gives
\[
\Pr[X=t_1]\le \Pr[X\ge t_1]\le \Pr\Bigl[X\ge \mu_X+\frac{\Delta}{2}\Bigr]
\le \exp\!\Bigl(-\frac{2(\Delta/2)^2}{t}\Bigr)
=\exp\!\Bigl(-\frac{\Delta^2}{2t}\Bigr).
\]
Since $\Delta=2t\varepsilon(1-k/n)$,
\[
\frac{\Delta^2}{2t}=2t\varepsilon^2\Bigl(1-\frac{k}{n}\Bigr)^2,
\]
hence, for $\Delta\ge 2$,
\begin{equation}\label{eq:upperPX_new}
\Pr[X=t_1]\le \exp\!\Bigl(-2t\varepsilon^2\Bigl(1-\frac{k}{n}\Bigr)^2\Bigr).
\end{equation}
If $\Delta<2$, then the claimed bound in the statement holds after enlarging the absolute prefactor (since $R(t_1)\le 1$ for all admissible parameters, while the right-hand side in the statement is at least $(20/3)\sqrt{t}\,e^{-8/t}\ge 1$ for all $t\ge 1$); thus we may assume $\Delta\ge 2$ and use \eqref{eq:upperPX_new}.

Finally, combine $R(t_1)=\Pr[X=t_1]/\Pr[Y=t_1]$ with \eqref{eq:lowerPY_new} and \eqref{eq:upperPX_new}:
\[
R(t_1)\le
\frac{\exp\!\bigl(-2t\varepsilon^2(1-k/n)^2\bigr)}{(3/20)\,t^{-1/2}}
=\frac{20}{3}\sqrt{t}\,\exp\!\Bigl(-2t\varepsilon^2\Bigl(1-\frac{k}{n}\Bigr)^2\Bigr).
\]
If $k\le (1-\gamma)n$, then $(1-k/n)^2\ge \gamma^2$, giving the stated corollary bound.
\end{proof}

Before moving to the proof of \Cref{thm:noisy-mls} itself, we highlight the key difficulties that lie ahead.
On the one hand, \Cref{lemma:partition-of-t} does not give us any improvement if $k/n$ is too high, since the exponential speed up depends on $1-k/n$.
On the other hand, $k$ cannot be very small as well, because the probability that $X$ is distributed unevenly between $\mathcal{U}_0$ and $\mathcal{U}_1$ with a large enough deviation from $\frac{1}{2}$ is high only when $\varepsilon^2 \cdot k$ is large.
Therefore, we should handle the values of $k$ close to $n$ or close to $0$ in an alternative way not present in the original MLS.
This complicates the running time analysis and requires additional work to tweak the algorithm's alternative parts and reach the running time claimed in \Cref{thm:noisy-mls}. Armed up with this understanding, we start proving the main result of our work.

\begin{proof}[Proof of \Cref{thm:noisy-mls}.]

\input{pseudocode}
Let $\mathcal{B}$ be the parameterized algorithm for \textsc{$\Phi$-Extension} that works with a given instance $I$, a partial solution candidate $X\subset [n]$, and an integer $k'$, finds an $Y\subset [n]\setminus X$ of size \emph{exactly} $k'$ such that $X\cup Y$ is a solution to $I$, if it exists.

The algorithm $\mathcal{E}$, a prediction-guided exact algorithm that finds a solution subset of size exactly $k$ if it exists, is present in Algorithm~\ref{alg:noisy-mls} as a pseudocode.
It consists of two parts, the exhaustive search part for very small and very large values of $k$ (lines \ref{line:first-part-start}-\ref{line:first-part-end}), and the MLS part (lines \ref{line:mls-part-start}-\ref{line:mls-part-end}).
The preliminary part (lines \ref{line:prelim-part-start}-\ref{line:prelim-part-end}) evaluates the value $\gamma$ depending on $c$, which specifies the threshold between the alternative parts.
We move on to the analysis of the algorithm, that also contains some alternative discussion of how it works.

\paragraph{Correctness and success probability.}
The algorithm is deterministic, but may fail if the prediction quality is bad.
The exhaustive search part is ran if $k\le \gamma n$ or $n-k\le \gamma n$, it is obviously correct and does not use the prediction.
The second part of the algorithm relies on the prediction.
Let $S^*\subset \mathcal{U}$ be the solution of size $k$, which we have the prediction $(\mathcal{U}_0, \mathcal{U}_1)$ for in the form of partition of $[n]$.
For each $i\in\{0,1\}$, let $n_i=|\mathcal{U}_i|$ and let $k_i=|S^*\cap \mathcal{U}_i|$.
For $\mathcal{E}$ to work properly, we require that \Cref{lemma:partition-of-t} is applicable, that is, $k_1\ge \left(
\frac{1}{2}+\frac{\varepsilon}{2}\right)k$ and $n_1\le k_1+(n-k)(\frac{1}{2}-\frac{\varepsilon}{2})$.
The probability that is not true is then at most
\[\Pr\left[k_1\le\left(\tfrac{1}{2}+\tfrac{\varepsilon}{2}\right)\cdot k \lor |\mathcal{U}_1\setminus S^*|\ge |\mathcal{U}\setminus S^*|\cdot\left(\tfrac{1}{2}-\tfrac{\varepsilon}{2}\right)\right],\]
where $(\mathcal{U}_0, \mathcal{U}_1)$ is the only random variable. This is equivalent to
\[\Pr\left[|\mathcal{U}_1\cap S^*|\le\left(\tfrac{1}{2}+\tfrac{\varepsilon}{2}\right)\cdot |S^*| \lor |\mathcal{U}_0\setminus S^*|\le \left(\tfrac{1}{2}+\tfrac{\varepsilon}{2}\right)\cdot |\mathcal{U}\setminus S^*|\right],\]
which is upper bounded, via union bound, with
\[\Pr\left[|\mathcal{U}_1\cap S^*|\le\left(\tfrac{1}{2}+\tfrac{\varepsilon}{2}\right)\cdot |S^*|\right] +\Pr\left[|\mathcal{U}_0\setminus S^*|\le \left(\tfrac{1}{2}+\tfrac{\varepsilon}{2}\right)\cdot |\mathcal{U}\setminus S^*|\right].\]

Note that $|\mathcal{U}_1\cap S|$ and $|\mathcal{U}_0\setminus S|$ are sums of $k$ and $(n-k)$ random variables respectively, each variable is at least $\frac{1}{2}+\varepsilon$ in expectation.
In this part of the algorithm, we have $\min\{k, n-k\}\ge \gamma\cdot n$.
Then by \Cref{lemma:sum_near_expectation}, these probabilities are both at most $1/(\varepsilon^2\gamma n)$, under  pairwise independence, and at most $\exp(-\varepsilon^2\gamma n/2)$, under mutual independence of predictor's guess probabilities.
If both events do not occur, the algorithm finds the solution correctly for at least one choice of $k_1$: $\mathcal{E}$ has to iterate $k_1$ because it can't know it in advance.

\paragraph{Running time.}
We now give an upper bound to the running time of our algorithm.
To bound this, we use the following numerical lemma relating the binary entropy function to a simpler expression. We show that on the $(0, \frac{1}{2}]$ interval, the function $2^{H(x)}$ can be bounded by $1 + 2 \cdot x \log_2 \frac{1}{x}$. Moreover, this bound is tight, since $2^{H(\frac{1}{2})} = 2 .$  

\begin{lemma}
\label{lemma:1-2xlogx}
    For every $x \in (0, \frac{1}{2}]$, $2^{H(x)} \le 1 + 2 \cdot x \log_2(\frac{1}{x})$.
\end{lemma}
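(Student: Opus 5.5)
The plan is to split the inequality into two simpler ones: first remove the exponential using convexity, and then compare the two halves of the entropy function.

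\textbf{Step 1 (linearize the exponential).} The function $y\mapsto 2^y$ is convex. On $[0,1]$ it therefore lies below its chord through $(0,1)$ and $(1,2)$, so $2^y\le 1+y$ for $y\in[0,1]$. Since $H(x)\in[0,1]$ for $x\in(0,\frac12]$, this gives $2^{H(x)}\le 1+H(x)$. It thus suffices to prove
\[
H(x)\le 2x\log_2\tfrac{1}{x}\qquad\text{for } x\in(0,\tfrac12].
\]
This bound is tight at $x=\frac12$, where both sides equal $1$.

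\textbf{Step 2 (compare the two entropy terms).} Write $H(x)=x\log_2\frac1x+(1-x)\log_2\frac1{1-x}$. The target inequality becomes $(1-x)\log_2\frac{1}{1-x}\le x\log_2\frac1x$. Let $f(u)=u\ln\frac1u$ on $[0,1]$, with $f(0)=0$, and set $g(x)=f(x)-f(1-x)$. We need $g\ge 0$ on $(0,\frac12]$. The endpoint values are $g(0)=f(0)-f(1)=0$ and $g(\frac12)=0$. Differentiating twice gives
\[
g''(x)=f''(x)-f''(1-x)=-\frac1x+\frac1{1-x}.
\]
This is $\le 0$ on $(0,\frac12]$, so $g$ is concave on $[0,\frac12]$. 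It is also continuous at $0$, since $u\ln u\to 0$. A concave function vanishing at both endpoints of an interval is nonnegative on it, hence $g\ge 0$ on $(0,\frac12]$. Dividing by $\ln 2$ gives the inequality from Step~1, and adding $x\log_2\frac1x$ to both sides yields $H(x)\le 2x\log_2\frac1x$.

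\textbf{Expected obstacle.} The only delicate point is the concavity argument in Step~2. One must handle the endpoint $x=0$, either by continuity or by restricting to $[\delta,\frac12]$ and letting $\delta\to0$. One must also check the sign of $g''$. Both are routine.
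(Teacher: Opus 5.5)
Your proof is correct and takes the same route as the paper. You bound $2^{H(x)}\le 1+H(x)$ via $2^y\le 1+y$ on $[0,1]$, and then obtain $H(x)\le 2x\log_2\frac1x$ by comparing the two entropy terms. The one difference is that the paper simply asserts $-(1-x)\log_2(1-x)\le -x\log_2 x$ on $(0,\frac12)$, whereas your concavity argument for $g(x)=f(x)-f(1-x)$ actually proves it, and it also covers the endpoint $x=\frac12$, where equality holds.
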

\begin{proof}
    On the interval $(0, \frac{1}{2})$, $- (1-x) \log_2(1-x) \le - x \log_2(x)$.
    Hence,
    $$H(x) = -x \log_2 x - (1 - x) \log_2(1 - x) < - 2x \log_2 x = 2x \log_2 \frac{1}{x} \; .$$
    Since $2^t \le 1 + t$ for $0 \le t \le 1$ and $0 \le H(x) \le 1$,
    $$2^{H(x)} \le 1 + H(x) \le 1 + 2x \log_2 \frac{1}{x} \; .$$
\end{proof}

Armed with this, we can now bound the running time of the first part. The exhaustive search part boils down to enumerating a subset of $[n]$ of size at most $\gamma n$, so the running time of this part is upper bounded by $2^{H(\gamma)\cdot n}\cdot\polyn$.
We claim that for both $c\le \frac{4}{3}$ and $c>\frac{4}{3}$, the two different cases of choosing $\gamma$, the running time fits within the bound claimed by \Cref{thm:noisy-mls}.

\begin{claim}\label{claim:mls-first-part-bound}
    $2^{H(\gamma)}<1+(1-\frac{1}{c})/2\le 2-\frac{1}{c}-(1-\frac{1}{c})\cdot \varepsilon$.
\end{claim}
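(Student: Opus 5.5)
The claim is a chain of two inequalities. Only the first depends on the choice of $\gamma$ made in the preliminary part of Algorithm~\ref{alg:noisy-mls}, whose lines are not reproduced in the text above. I would prove the two inequalities separately, treating the second as pure algebra and the first via \Cref{lemma:1-2xlogx}, with the case split $c\le\frac43$ versus $c>\frac43$ that the algorithm uses. Throughout I write $\delta:=1-\frac1c\in(0,1)$.

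\emph{Second inequality.} The inequality $1+\delta/2\le 2-\frac1c-\delta\varepsilon$ rewrites as $1+\delta/2\le 1+\delta-\delta\varepsilon$. This is equivalent to $\delta\varepsilon\le\delta/2$, that is, $\varepsilon\le\frac12$. That holds since $\varepsilon\in(0,\frac12]$ is a standing assumption, so this part needs no further work.

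\emph{First inequality.} Since $\gamma\in(0,\frac12]$, \Cref{lemma:1-2xlogx} gives $2^{H(\gamma)}\le 1+2\gamma\log_2\frac1\gamma$. It therefore suffices to show $4\gamma\log_2\frac1\gamma<\delta$. The function $x\log_2\frac1x$ is increasing on $(0,1/e)$, so any $\gamma$ no larger than a working threshold also works. I expect the pseudocode to choose $\gamma$ as follows, and I would verify the displayed bounds for whatever explicit value it uses.
\begin{itemize}
\item Case $c>\frac43$: here $\delta>\frac14$, so a constant $\gamma$ suffices. For instance, $\gamma=\frac1{128}$ gives $4\gamma\log_2\frac1\gamma=\frac{7}{32}<\frac14<\delta$.
\item Case $c\le\frac43$: here $\delta\le\frac14$ and $\gamma$ must shrink with $\delta$. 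I would take $\gamma$ of the form $\delta^2/16$. Then $4\gamma\log_2\frac1\gamma=\frac{\delta^2}{4}\bigl(4+2\log_2\frac1\delta\bigr)$, and requiring this to be $<\delta$ reduces to $\delta\bigl(4+2\log_2\frac1\delta\bigr)<4$. This follows from $\delta\le\frac14$ together with $\delta\log_2\frac1\delta\le\frac{1}{e\ln 2}<0.54$, which give a left side at most $1+1.08<4$.
\end{itemize}
In both cases the inequality is strict, as the claim requires.

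\emph{Expected obstacle.} The only delicate point is matching these computations to the exact $\gamma$ fixed in the pseudocode. If the paper's $\gamma$ differs by a constant factor, the same two estimates go through with adjusted constants, namely $\delta\log_2\frac1\delta$ being bounded and $x\log_2\frac1x$ being monotone near $0$. The essential requirement is that $\gamma=\Theta_c(1)$ is small enough that $2\gamma\log_2\frac1\gamma<\delta/2$.
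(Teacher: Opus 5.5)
You handle the second inequality correctly: it reduces to $\varepsilon\le\frac12$, a step the paper leaves implicit. Your estimates are also correct for the two values of $\gamma$ you chose. The gap is that the claim concerns the specific $\gamma$ fixed in the preliminary part of Algorithm~\ref{alg:noisy-mls}, and your transfer argument covers neither branch of it. That $\gamma$ is $\gamma=(1-\frac1c)/(20\ln\frac{c}{c-1})$ for $c\le\frac43$ and $\gamma=0.05$ for $c>\frac43$.

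For $c\le\frac43$, with $\delta=1-\frac1c$, the paper's $\gamma=\delta/(20\ln\frac1\delta)$ exceeds your $\delta^2/16$ by the factor $\frac{4}{5\delta\ln(1/\delta)}>1$. So monotonicity of $x\log_2\frac1x$ points the wrong way, and the computation has to be redone. Redoing it via \Cref{lemma:1-2xlogx} exactly as you propose, which is also the paper's route, reduces to $\log_2(20\rho\ln\rho)/\ln\rho<5$ for $\rho=\frac{c}{c-1}\ge4$. This holds: the left side equals $\frac{1}{\ln 2}+\frac{\log_2 20+\log_2\ln\rho}{\ln\rho}$, which is decreasing for $\rho\ge4$ and about $4.90$ at $\rho=4$.

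For $c>\frac43$ no argument can work, because with $\gamma=0.05$ the first inequality is false. One has $2^{H(0.05)}\approx1.2196$, whereas $c>\frac43$ only guarantees $1-\frac1c>\frac14$, so $1+(1-\frac1c)/2$ gets arbitrarily close to $1.125$. The inequality fails for every $c\in(\frac43,1.78]$. The paper's proof slips at exactly this point: it asserts $1+(1-\frac1c)/2>1+0.75/2$, when only $1+0.25/2$ is justified. So your closing remark that a different $\gamma$ would only need ``adjusted constants'' is wrong here; the value of $\gamma$ itself must change. Any $\gamma\le0.025$ gives $2^{H(\gamma)}<1.125$, so your $\gamma=\frac1{128}$ is fine.

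What your proposal actually proves is a corrected version of the claim, for a modified algorithm, and you should say so explicitly. You should also note that the change is harmless downstream. In \Cref{claim:main-runtime} the saving is $(1-\frac1c)\gamma^2\varepsilon^2/4$, which remains $\Omega_c(\varepsilon^2)$ for any positive $\gamma$ depending only on $c$. For $c\le\frac43$ you may as well keep the paper's larger $\gamma$, since it already satisfies the claim and gives a better constant there than your $\delta^2/16$.
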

\begin{proof}
    When $c>\frac{4}{3}$, we use $\gamma=0.05$.
    In this case, $1.219<2^{H(\gamma)}<1.22$.
    On the other hand, $1+(1-\frac{1}{c})/2> 1+0.75/2=1.375>2^{H(\gamma)}.$

    The other case is $c\le \frac{4}{3}.$
    Then $\gamma=(1-\frac{1}{c})/(20\ln(\frac{c}{c-1}))$.
    By \Cref{lemma:1-2xlogx}, $2^{H(\gamma)}$ is at most \[1+2\cdot \left(1-\tfrac{1}{c}\right)\cdot \log_2\left(\tfrac{c}{c-1} \cdot20 \cdot \ln\left(\tfrac{c}{c-1}\right)\right) /\left(20\cdot  \ln\left(\tfrac{c}{c-1}\right)\right).\]
This is equivalent, for $\rho=c/(c-1)\ge 4$, to
\[1+\left(1-\tfrac{1}{c}\right)\cdot\tfrac{1}{10}\cdot \left(\tfrac{\log_2\rho}{\ln\rho}+\tfrac{\log_2 20}{\ln\rho}+\tfrac{\log_2(\ln\rho)}{\ln\rho}\right).\]
For $\rho\ge 4$, 
$\tfrac{\log_2\rho}{\ln\rho}+\tfrac{\log_2 20}{\ln\rho}+\tfrac{\log_2(\ln\rho)}{\ln\rho}<5,$
and the claim follows.
\end{proof}

The analysis of the second part relies on two auxiliary lemmas. The first gives a useful upper bound on $c^{-x}$, and the second is an enhanced version of Lemma~2.3 from~\citep{MLS} (original version has $c_2=1$).
\begin{lemma}
\label{lemma:c**-x}
    For every $c > 1$ and $x \in [0, \frac{1}{\ln c}]$, $c^{-x} \le 1 - \frac{\ln c}{2} \cdot x$. 
\end{lemma}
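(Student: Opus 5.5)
The inequality $c^{-x} \le 1 - \frac{\ln c}{2}\, x$ compares a convex function of $x$ with a linear one. We will exploit convexity, which reduces the statement to a check at the two endpoints of the interval. First rewrite the left-hand side as $c^{-x} = e^{-x\ln c}$ and substitute $y := x \ln c$. As $x$ ranges over $[0, \frac{1}{\ln c}]$, the variable $y$ ranges over $[0,1]$. The claim then becomes the $c$-free inequality
\[
e^{-y} \le 1 - \tfrac{y}{2} \qquad \text{for all } y \in [0,1],
\]
so it suffices to prove this single-variable statement.

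Next, $g(y) := e^{-y}$ is convex on $\mathbb{R}$. On $[0,1]$ it therefore lies below its chord through $(0, g(0)) = (0,1)$ and $(1, g(1)) = (1, e^{-1})$:
\[
e^{-y} \le 1 - (1 - e^{-1})\, y \qquad \text{for } y\in[0,1].
\]
Since $e^{-1} < \frac{1}{2}$ (because $e > 2$), we have $1 - e^{-1} > \frac{1}{2}$. Hence $1 - (1-e^{-1})y \le 1 - \frac{y}{2}$ for every $y \ge 0$, which gives the desired bound. Substituting $y = x\ln c$ back recovers the lemma.

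There is no real obstacle here. The only care needed is the change of variables, which uses $c > 1$ so that $\ln c > 0$ and the interval maps onto $[0,1]$. The constant $\frac{1}{2}$ is not tight: the chord gives the stronger coefficient $1 - e^{-1} \approx 0.632$, and $\frac12$ is simply a convenient rounding of it.
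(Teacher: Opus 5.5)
Your proposal is correct and follows essentially the same argument as the paper. The paper shows that $f(x)=1-\frac{\ln c}{2}x-c^{-x}$ is concave and checks that $f(0)=0$ and $f(\frac{1}{\ln c})=\frac12-\frac1e>0$; your substitution $y=x\ln c$ followed by the chord bound for the convex function $e^{-y}$ and the comparison $1-e^{-1}>\frac12$ is the same convexity-plus-endpoints idea, resting on the same numerical fact $e>2$.
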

\begin{proof}
    Let us define a function $f$ as
    $$f(x) = 1 - \frac{\ln c}{2} \cdot x - c^{-x} \; .$$
    We compute $f$ in $0$ and $\frac{1}{\ln c}$ and show that its second derivative is negative within the interval between them, meaning that for $x \in [0, \frac{1}{\ln c}]$, $f(x) \ge \min(f(0), f(\frac{1}{\ln c}))$.
    $$f'(x) = - \frac{\ln c}{2}  + c^{-x} \cdot \ln c \; .$$
    $$f''(x) =  - c^{-x} \cdot \ln^2 c \le 0 \; ,$$
    so $f$ is concave.
    Moreover,
    $$f(0) = 1 - 0 - 1 = 0 \; , \quad \text{and}$$
    $$f \big(\,\frac{1}{\ln c}\,\big) = 1 - \frac{\ln c}{2} \cdot \frac{1}{\ln c} - c^{- \frac{1}{\ln c}} = 1 - \frac{1}{2} - e^{-\frac{\ln c}{\ln c}} = 1 - 1/2 - 1/e > 0 \;, $$
    so for every $x \in [0, \frac{1}{\ln c}]$, $f(x) \ge 0$ implying $c^{-x} \le 1 - \frac{\ln c}{2} \cdot x$.
\end{proof}

\begin{lemma}\label{lemma:modified-local-search-bound}
    Let $c_1>c_2\ge 1$ be two constants and $n\ge k\ge t$ be three non-negative integers.
    Then $$\frac{\binom{n}{k}\cdot c_1^{k-t}}{\binom{n-t}{k-t}\cdot c_2^{k}}\le \left(1+\frac{1}{c_2}-\frac{1}{c_1}\right)^n\cdot \polyn.$$
\end{lemma}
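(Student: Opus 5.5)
The plan starts from a rewriting. Put $a:=1/c_1$ and $b:=1/c_2$, so $0<a<b\le 1$. Let $E(t)$ denote the left-hand side, and set $m:=n-k$ and $j:=k-t$. Then
\[
E(t)=\frac{\binom{n}{k}\,b^k}{\binom{m+j}{j}\,a^{j}}=\frac{\binom{n}{t}}{\binom{k}{t}}\cdot\frac{b^k}{a^{k-t}} .
\]

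\textbf{Step 1: how $E$ depends on $t$.} The ratio of consecutive values is
\[
\frac{E(t+1)}{E(t)}=\frac{n-t}{c_1(k-t)} .
\]
Hence $E$ decreases in $t$ up to $t^\ast=(c_1k-n)/(c_1-1)$ and increases afterwards.

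This has a consequence for how the statement must be read. The endpoint $t=k$ gives $E(k)=\binom nk b^k$, whose maximum over $k$ is about $(1+b)^n$. This exceeds $(1+b-a)^n$, so the inequality cannot hold for literally every $t\le k$. I will therefore prove it in the sense in which it is used, as in Lemma~2.3 of \citep{MLS}: for $t$ equal to the (clipped, integral) minimiser $t^\ast$, which is the value the algorithm picks. The first thing I will record is that this is the meaningful content, and that the factor $\polyn$ absorbs rounding $t^\ast$ to an integer.

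\textbf{Step 2: the interior case, via a negative-binomial averaging.} I will use the identity
\[
\sum_{j\ge 0}\binom{m+j}{j}a^j(1-a)^{m+1}=1 .
\]
Suppose the maximising index $j^\ast\approx am/(1-a)$ of this sum is at most $k$; this is exactly the condition $t^\ast\ge 0$. Then the maximal term $\binom{m+j^\ast}{j^\ast}a^{j^\ast}$ is at least $(1-a)^{-(m+1)}/\polyn$, since the sum has polynomially many non-negligible terms around the mode. Therefore
\[
\min_t E(t)\le \binom nk b^k(1-a)^{n-k}\cdot\polyn .
\]
Finally the binomial theorem gives
\[
\binom nk b^k(1-a)^{n-k}\le (b+1-a)^n=\Bigl(1+\tfrac1{c_2}-\tfrac1{c_1}\Bigr)^n .
\]
For $c_2=1$ this recovers the $(2-1/c)^n$ bound of \citep{MLS}.

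\textbf{Step 3: the boundary case $t^\ast<0$, i.e.\ $k<n/c_1=an$.} Here the best choice is $t=0$, and I must show
\[
\binom nk (b/a)^k\le (1+b-a)^n\cdot\polyn \qquad\text{for } k\le an .
\]
Since $b/a=c_1/c_2>1$, the left side increases with $k$ on this range, so it suffices to check $k=an$. By the entropy estimate this reduces to the scalar inequality
\[
\Bigl(\tfrac{b}{a}\Bigr)^{a}a^{-a}(1-a)^{-(1-a)}\le 1+b-a .
\]
I expect this step to be the main obstacle. My first attempt is to take logarithms and view both sides as functions of $b\in[a,1]$: check equality at $b=a$, and then compare derivatives using concavity of $\ln$, or use a weighted AM--GM with weights $a$ and $1-a$.

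If that check fails near $b=a$, I fall back on the decreasing branch of $E$. Instead of $t=0$ I take the largest admissible $j$, namely $j=k$, directly in Step 2's bound. Because the terms $\binom{m+j}{j}a^j$ are increasing for $j\le j^\ast$, one gets
\[
\binom{m+k}{k}a^k\ge \binom{n}{k}a^k(1-a)^{m}\cdot\frac{1}{\polyn},
\]
from which the same binomial-theorem bound $\binom nk b^k(1-a)^{n-k}\le(1+b-a)^n$ closes the case.
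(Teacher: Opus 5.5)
Your Steps 1 and 2 are correct and coincide with the paper's treatment of the case $k\ge n/c_1$. There you choose $t$ so that $\binom{n-t}{k-t}c_1^{-(k-t)}$ is maximal, bound it below by $(1-\frac1{c_1})^{-(n-k)}/\polyn$, and then keep a single term of the binomial expansion of $(1+b-a)^n$. Your observation that the lemma only makes sense as ``for a suitable $t$'' is also right, and that is how the paper proves and uses it.

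The gap is in Step~3. At $t=0$ you have $j=k$ and $\binom{m+k}{k}=\binom nk$, so by your own formula $E(0)=(b/a)^k=(c_1/c_2)^k$, not $\binom nk(b/a)^k$. The spurious binomial coefficient puts the factor $a^{-a}(1-a)^{-(1-a)}=2^{H(a)}$ into your scalar inequality. That inequality is then false: as $b\downarrow a$ its left side tends to $2^{H(a)}>1$ while the right side tends to $1$. So no concavity or AM--GM argument can establish it.

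The fallback does not rescue the case either. Taking $j=k$ is the same choice $t=0$. The displayed inequality is trivially true, since its two sides differ only by the factor $(1-a)^m/\polyn\le 1$, but it bounds the denominator from below by the wrong quantity. To obtain $E(0)\le\binom nk b^k(1-a)^{n-k}\cdot\polyn$ you would need $\binom nk a^k(1-a)^{n-k}\ge 1/\polyn$. This fails exponentially when $k$ is well below $an$: for $k=0$, $E(0)=1$ while $\binom nk b^k(1-a)^{n-k}=(1-a)^n$. In this regime the negative-binomial mode $j^\ast$ exceeds $k$ and is inadmissible, so the Step~2 intermediate bound is genuinely unavailable. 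You must compare with $(1+b-a)^n$ directly.

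With the correct value this takes one line, and it is the paper's first case. For $k\le an$ you have $E(0)=(b/a)^k\le (b/a)^{an}$. Weighted AM--GM with weights $a,1-a$, applied to the right expression, gives $(b/a)^a\cdot 1^{1-a}\le a\cdot\frac ba+(1-a)=1+b-a$, i.e.\ $(c_1/c_2)^{1/c_1}\le 1+\frac1{c_2}-\frac1{c_1}$.
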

\begin{proof}
    First, if $k\le n/c_1$, then pick $t=0$.
    The fraction is then at most $(\frac{c_1}{c_2})^{n/c_1}$, because $c_1>c_2\ge 1$, $\left(\frac{c_1}{c_2}\right)^{1/c_1}<1-\frac{1}{c_1}+\frac{1}{c_2}$.
    Now we know that $k>n/c_1$ and a lower bound for $\binom{n-t}{k-t}\cdot (\frac{1}{c_1})^{k-t}$, it is at least $\frac{1}{(1-\frac{1}{c_1})^{n-k}}$ up to a polynomial factor, for some choice of $t$.

    Then $$\frac{\binom{n}{k}\cdot c_1^{k-t}}{\binom{n-t}{k-t}\cdot c_2^{k}} \le \binom{n}{k}\cdot \left(\frac{1}{c_2}\right)^{k}\cdot\left(1-\frac{1}{c_1}\right)^{n-k}\le \left(1-\frac{1}{c_1}+\frac{1}{c_2}\right)^n.$$
\end{proof}

We are now ready to bound the running time of the MLS part. The second part of the algorithm uses set-inclusion families and the parameterized algorithm for \textsc{$\Phi$-Extension}.
For each choice of $k_1$, the running time of the algorithm's iteration is at most $|\mathcal{S}_0|\cdot |\mathcal{S}_1|\cdot c^{k-t}\cdot 2^{o(n)}$ (using the set inclusion family construction of \Cref{lemma:universal-family}).
We note that the choice of $t$ depends not only on $n, k$ and $c$, but also on $\gamma$ and $\varepsilon$ (line~\ref{line:choice-of-t}).
While $t$ corresponds to the size of $X$, we also choose $t_1$ corresponding to $|X\cap \mathcal{U}_1|$ according to our ``uneven parts'' intuition (line~\ref{line:choice-of-t1}).
We claim that there exist choices that allow to ``shave off'' some additional part (that depends on $c$, $\gamma$ and $\varepsilon$) from the original $(2-\frac{1}{c})$ exponent base of the MLS.

\begin{claim}\label{claim:main-runtime}
    Let $n_0,n_1,k\ge 0$ be integers such that $n_0+n_1=n$ and $k\le(1-\gamma)n$.
    For each $k_1\ge \left(\frac{1}{2}+\frac{\varepsilon}{2}\right)\cdot k$ with $k_1\le k$ and $n_1\le k_1+(n-k)\cdot (\frac{1}{2}-\frac{\varepsilon}{2})$, there exists a choice of $t_0,t_1\ge 0$ with $t_0+t_1\le k$ such that
    \[\tfrac{\binom{n_0}{t_0}}{\binom{k_0}{t_0}}\cdot\tfrac{\binom{n_1}{t_1}}{\binom{k_1}{t_1}} \cdot c^{k-(t_0+t_1)}\le \left(1+\left(1-\tfrac{1}{c}\right)\cdot (1-\tfrac{\gamma^2\varepsilon^2}{4})\right)^n\cdot \polyn.\]
\end{claim}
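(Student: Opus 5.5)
The plan is to combine \Cref{lemma:partition-of-t} with the classical MLS bound, and to take the gain from the exponential factor $\sqrt{e}^{-t\varepsilon^2(1-k/n)^2}$. Since $k\le(1-\gamma)n$, we have $(1-k/n)^2\ge\gamma^2$. So for every $t\le k$ there is a split $t=t_0+t_1$ with
\[
\tfrac{\binom{n_0}{t_0}}{\binom{k_0}{t_0}}\cdot\tfrac{\binom{n_1}{t_1}}{\binom{k_1}{t_1}}\le \tfrac{\binom{n}{t}}{\binom{k}{t}}\cdot \tfrac{20}{3}\sqrt{t}\cdot c_\ast^{-t},\qquad c_\ast:=e^{\gamma^2\varepsilon^2/2}>1 .
\]
The left-hand side of the claim is then at most $\polyn\cdot \frac{\binom{n}{t}}{\binom{k}{t}}\,c^{k-t}c_\ast^{-t}$. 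Using the identity $\binom{n}{t}/\binom{k}{t}=\binom{n}{k}/\binom{n-t}{k-t}$, this equals $\binom{n}{k}c^{k-t}c_\ast^{-t}/\binom{n-t}{k-t}$.

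Next I rewrite $c^{k-t}c_\ast^{-t}=(c c_\ast)^{k-t}\cdot c_\ast^{-k}$. The quantity becomes
\[
\frac{\binom{n}{k}\,(c c_\ast)^{k-t}}{\binom{n-t}{k-t}\,c_\ast^{k}},
\]
which is exactly the form of \Cref{lemma:modified-local-search-bound} with $c_1=c c_\ast>c_2=c_\ast\ge1$. Minimising over $t$ gives the bound $\bigl(1+\frac{1}{c_\ast}-\frac{1}{c c_\ast}\bigr)^n\polyn=\bigl(1+(1-\frac1c)c_\ast^{-1}\bigr)^n\polyn$. It remains to show $c_\ast^{-1}=e^{-\gamma^2\varepsilon^2/2}\le 1-\frac{\gamma^2\varepsilon^2}{4}$. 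This follows from \Cref{lemma:c**-x} applied with base $e$ and $x=\gamma^2\varepsilon^2/2\le 1$ (note $1/\ln e=1$). It gives $e^{-x}\le 1-x/2$, which is precisely the needed factor.

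The main obstacle is to make sure the $t$ chosen inside \Cref{lemma:modified-local-search-bound} is legitimate for \Cref{lemma:partition-of-t}.
\begin{itemize}
\item \Cref{lemma:partition-of-t} needs $t\ge1$. The case $t=0$ is trivial: take $t_0=t_1=0$, and the bound $\binom{n}{k}c_\ast^{-k}\cdot c_1^{k}/\binom{n}{k}$ form is covered directly.
\item $t_0\le k_0$ and $t_1\le k_1$ must hold. This is guaranteed because the lemma's choice of $t_1$ is a mode of the hypergeometric $Y$, which has positive mass.
\item The $\sqrt t$ prefactor and the $2^{o(n)}$ losses are absorbed into $\polyn$ (or subexponential factors, as elsewhere in the paper).
\item The factor in \Cref{lemma:partition-of-t} is stated with $\varepsilon$ while its proof uses $\varepsilon/2$. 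I would double-check the exponent constant so that $c_\ast\ge e^{\gamma^2\varepsilon^2/2}$ indeed holds. If only a weaker constant comes out, it still suffices after adjusting the $x$ fed into \Cref{lemma:c**-x}.
\end{itemize}
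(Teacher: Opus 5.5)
Your proposal is correct and follows the paper's proof step for step. It uses \Cref{lemma:partition-of-t} with $(1-k/n)^2\ge\gamma^2$, the identity $\binom{n}{t}/\binom{k}{t}=\binom{n}{k}/\binom{n-t}{k-t}$, \Cref{lemma:modified-local-search-bound} with $c_1=cA_{\gamma,\varepsilon}$ and $c_2=A_{\gamma,\varepsilon}$ (your $c_\ast$ is exactly the paper's $A_{\gamma,\varepsilon}$), and \Cref{lemma:c**-x} to get $e^{-\gamma^2\varepsilon^2/2}\le 1-\gamma^2\varepsilon^2/4$. Your constant check resolves favourably, because the lemma's stated factor $\sqrt{e}^{-t\varepsilon^2(1-k/n)^2}$ already absorbs the $\varepsilon/2$ rescaling used in its proof; your explicit treatment of $t=0$ and of $t_i\le k_i$ makes precise points the paper leaves implicit.
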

\begin{proof}
    By \Cref{lemma:partition-of-t}, when $t=t_0+t_1$ is fixed, we can always choose $t_1$ such that

    \[\tfrac{\binom{n_0}{t_0}}{\binom{k_0}{t_0}}\cdot\tfrac{\binom{n_1}{t_1}}{\binom{k_1}{t_1}}\le \tfrac{\binom{n}{t}}{\binom{k}{t}}\cdot e^{-\frac{1}{2} \gamma^2\varepsilon^2\cdot t}\cdot n^{\Oh(1)}.\]

    Note that $\binom{n}{t}/\binom{k}{t}=\binom{n}{k}/\binom{n-t}{k-t}$.
    Therefore, the upper bound above is equivalent (up to a polynomial factor) to
    \begin{equation}\label{eq:upper-bound-inner}
    \tfrac{\binom{n}{k}}{\binom{n-t}{k-t}\cdot e^{\frac{1}{2}\gamma^2\varepsilon^2\cdot t}} \cdot c^{k-t}=\tfrac{\binom{n}{k}\cdot c^{k-t}}{ \binom{n-t}{k-t}\cdot(A_{\gamma,\varepsilon})^t}=\tfrac{\binom{n}{k}\cdot c^{k-t}\cdot (A_{\gamma,\varepsilon})^{k-t}}{ \binom{n-t}{k-t}\cdot (A_{\gamma,\varepsilon})^k}=\tfrac{\binom{n}{k}\cdot (c\cdot A_{\gamma,\varepsilon})^{k-t}}{ \binom{n-t}{k-t}\cdot (A_{\gamma,\varepsilon})^k},
    \end{equation}
    where $A_{\gamma,\varepsilon}=e^{\frac{\gamma^2\varepsilon^2}{2}}>1.$
    
       Finally, to show that the choice of $t$ satisfying the claim statement exists, we apply \Cref{lemma:modified-local-search-bound} with $c_1:=cA_{\gamma,\varepsilon}$ and $c_2:=A_{\gamma,\varepsilon}$ to the right part of (\ref{eq:upper-bound-inner}).
       This yields the upper bound of form \[\left(1-\tfrac{1}{cA_{\gamma,\varepsilon}}+\tfrac{1}{A_{\gamma, \varepsilon}}\right)^n=\left(1+(1-\tfrac{1}{c})\cdot\tfrac{1}{A_{\gamma,\varepsilon}}\right)^n\le (1+(1-\tfrac{1}{c})\cdot (1-\tfrac{\gamma^2\varepsilon^2}{4})),\]
       where the last inequality comes from \Cref{lemma:c**-x}. The proof of the claim is complete.
\end{proof}

The final upper bound on the running time of the second part thus depends on the choice of $\gamma$.
In its turn, it depends on $c$.
If $c>\frac{4}{3}$, $\gamma=0.05$ and the bound in \Cref{claim:main-runtime} gives $(2-\frac{1}{c}+(1-\frac{1}{c})\cdot\Omega(\varepsilon^2))^n$, where $\Omega$ hides constant $\gamma^2$.
If $c\le \frac{4}{3}$, $\gamma^2$ is estimated as (by line~\ref{line:eval-gamma})
\[\gamma^2=\tfrac{1}{400}\cdot \left(1-\tfrac{1}{c}\right)^2/\ln^2\left(\tfrac{c}{c-1}\right)<\tfrac{1}{400}\cdot\left(1-\tfrac{1}{c}\right)^3,\]
since $\frac{1}{\ln^{2} \frac{1}{x}}>x$ holds for any $x\in (0,1)$.
Consequently, the running time of the second part is upper bounded by $(2-\frac{1}{c}+(1-\frac{1}{c})^4\cdot\Omega(\varepsilon^2))^n$, and we can safely hide the explicit $(1-\frac{1}{c})^4\cdot \Omega(\varepsilon^2)$ under the implicit $\Omega_c(\varepsilon^2)$ and finish the proof of the main result.
\end{proof}

\section{Unknown Prediction Quality}\label{sec:quality}

The previous section introduced algorithms that improve exhaustive and Monotone Local Search substantially under the presence of noisy predictors.
They are, however, susceptible to not knowing the estimate on the predictor's accuracy offset $\varepsilon$ (but they require its explicit value) and even the prediction quality, as it still may happen that a predictor gives prediction that deviates from the average too much.

In this section, we aim to deal with this problem.
Under mutual independence, we are able to achieve algorithms that work in average $(2-\Omega(\varepsilon^2))^n$ time (for exhaustive search) and in average $(2-\frac{1}{c}+\Omega_c(\varepsilon^2))^n$ time (for Monotone Local Search), if a prediction is given by a noisy predictor with \emph{unknown} accuracy $\frac{1}{2}+\varepsilon$.
The algorithms versions here, in contrast to previous sections, do not rely on the assumptions that prediction is good with high probability; instead the low probability of bad predictions is smoothed out by the average running time.
Another important assumption of our algorithms is that the predictor knows a viable solution that is actually a solution to the problem, so an algorithm can stop on its encounter.

The key idea is to deal with unknown value of $\varepsilon$ by making consecutive runs of the original algorithms from \Cref{sec:search}, with gradually decreasing choices of $\varepsilon$.
Each consecutive run will use exactly the same prediction as before, taken only once initially from the predictor.
For instance, take the prediction-guided algorithm of \Cref{subsec:exhaustive} and run it with (the guessed estimate) $\varepsilon_1=\frac{1}{4}$.
If it gives a solution, we can report it and stop.
Otherwise, repeat the run with $\varepsilon_2=\frac{1}{8}$, if it does not give the solution, repeat it again with $\varepsilon_3=\frac{1}{16}$ and so on, until the run finishes with a solution.
On $i^\text{th}$ run, we use $\varepsilon_k=2^{-(i+1)}$.
Note that when $i>\log_2n$, we can just use the (unbiased) brute-force that runs in $2^n\cdot\polyn$, because its only a polynomial-factor superior to $(2-\frac{1}{n})^n$.

To estimate the average running time of this algorithm, it is enough to estimate the probability that an iteration with $\varepsilon_k\le \varepsilon$ does not find a solution.
All prior iterations are negligible, since they run even faster than $(2-\Omega(\varepsilon^2))^n$ and they are at most $-\log_2\varepsilon$ many of them.
Note that the choice of $\varepsilon_k$ fails if the symmetric difference $S^*\triangle \widetilde{S}^*$ is larger than $(\frac{1}{2}+\frac{\varepsilon_i}{2})\cdot n$.
Similarly to \Cref{lemma:sum_near_expectation} (the mutual independence part), the probability that this happens is at most $\exp(-2(\varepsilon-\frac{\varepsilon_i}{2})^2n)$ (\Cref{lemma:sum_near_expectation} just used $\varepsilon_i=\varepsilon$).
Then the average runtime is at most \[\sum_{i = 1}^{\lceil \log_2 n \rceil} {\ \big(2 - \Omega(2^{-2(i+1)}) \big)^n}\cdot {\exp(2 \cdot \big(\varepsilon - {2^{-(i+2)}}\big)^2 n)}\le (2-\Omega(\varepsilon^2))^n.\]

We now make this argument precise. For clarity, we simplify the running times of the corresponding algorithms from \Cref{sec:search}, to just $(2-\varepsilon^2)^n$ and $(2-\frac{1}{c}-\varepsilon^2)^n$.
This does not break the correctness of our computations.
\begin{figure}
    \centering
    \includegraphics[width=0.9\linewidth]{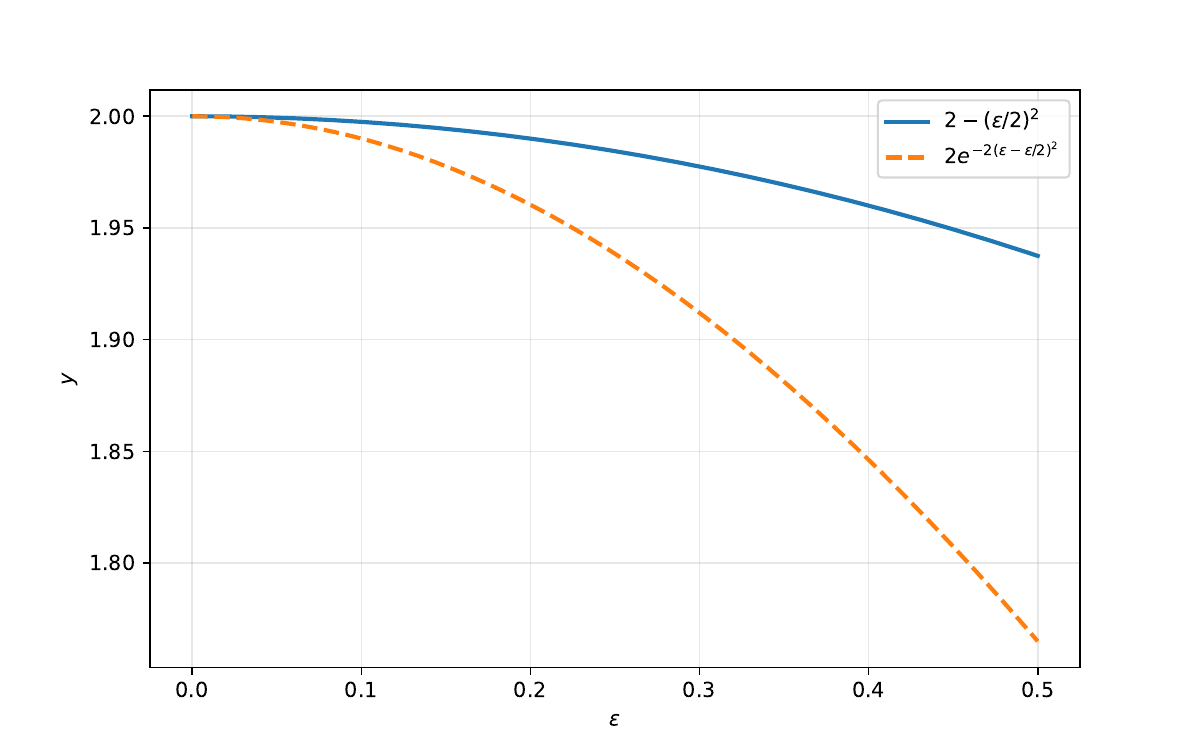}
    \caption{$2 - \varepsilon^2/4$ and $2 e^{-\varepsilon^2/2}$ on $[0, \frac{1}{2}]$}
    \label{fig:graph_max_func}
\end{figure}

\begin{lemma}
    The expected runtime of the prediction-guided exhaustive search algorithm is $(2 - \frac{\varepsilon^2}{4})^n\cdot\polyn$.
\end{lemma}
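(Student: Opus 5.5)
The plan is to write the expected runtime as a sum over the guessing rounds, split at the first round whose guess falls below the true $\varepsilon$, and bound the tail by weighting each later round's cost with the probability of reaching it. Following the simplification stated just before the lemma, run $i$ with guess $\varepsilon_i=2^{-(i+1)}$ costs at most $(2-\varepsilon_i^2)^n\cdot\polyn$. Rounds stop once $i>\log_2 n$, where we switch to plain brute force costing $2^n\cdot\polyn$.

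Let $i^\star$ be the first index with $\varepsilon_{i^\star}\le\varepsilon$. Since the guesses halve each round, $\varepsilon_{i^\star}>\varepsilon/2$.

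\emph{Rounds up to $i^\star$.} These are always executed in the worst case. Each costs at most $(2-\varepsilon_{i^\star}^2)^n\le(2-\varepsilon^2/4)^n$ up to a polynomial factor, and there are at most $\lceil\log_2 n\rceil$ of them. Their total is $(2-\frac{\varepsilon^2}{4})^n\cdot\polyn$.

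\emph{Rounds after $i^\star$.} Round $i>i^\star$ is reached only if round $i-1$ failed, so round $i^\star$ in particular failed. Round $i^\star$ succeeds whenever $|S^*\triangle\widetilde{S}^*|\le(\frac12-\frac{\varepsilon_{i^\star}}{2})n$, i.e.\ whenever the number of correct guesses is at least $(\frac12+\frac{\varepsilon_{i^\star}}{2})n$. The expected number of correct guesses is at least $(\frac12+\varepsilon)n$, so the required deficit is at least $(\varepsilon-\frac{\varepsilon_{i^\star}}{2})n\ge\frac{\varepsilon}{2}n$. Applying the mutual-independence (Hoeffding) part of \Cref{lemma:sum_near_expectation} with $t=\varepsilon n/2$, the failure probability is at most $e^{-\varepsilon^2 n/2}$. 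Every later round costs at most $2^n\cdot\polyn$ (brute force included), and there are at most $\polyn$ of them. Their expected contribution is therefore at most $\bigl(2e^{-\varepsilon^2/2}\bigr)^n\cdot\polyn$.

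The main obstacle is comparing $2e^{-\varepsilon^2/2}$ with $2-\varepsilon^2/4$; this is what \Cref{fig:graph_max_func} illustrates. Using $e^{-x}\le 1-x+x^2/2$ for $x\ge0$ with $x=\varepsilon^2/2$ gives
\[
2e^{-\varepsilon^2/2}\le 2-\varepsilon^2+\tfrac{\varepsilon^4}{4}\le 2-\tfrac{\varepsilon^2}{4},
\]
where the last step needs $\varepsilon^4/4\le 3\varepsilon^2/4$, which holds since $\varepsilon\le\frac12$. Both parts are then $(2-\frac{\varepsilon^2}{4})^n\cdot\polyn$, and summing them proves the lemma.

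One technicality needs care: if $\varepsilon<1/n$, then $i^\star$ may exceed $\log_2 n$. In that case $(2-\varepsilon^2/4)^n=\Theta(2^n)$, so brute force already meets the bound.
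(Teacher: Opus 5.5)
Your proof is correct and follows the same route as the paper: split the sequence of guesses around $\varepsilon/2$, charge each early round deterministically at most $(2-\varepsilon^2/4)^n$, charge each later round $2^n$ times the Hoeffding failure probability $e^{-\varepsilon^2 n/2}$, and conclude from $2e^{-\varepsilon^2/2}\le 2-\varepsilon^2/4$. The differences are gains in rigor: you prove that last inequality via $e^{-x}\le 1-x+x^2/2$ where the paper reads it off a plot; you weight later rounds by the probability of actually reaching them, whereas the paper's sum attaches each round's own failure bound to that round; and you treat the case $\varepsilon<1/n$ explicitly.
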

\begin{proof}
    We can estimate the expected value of the running time as    
    $$\sum\limits_{i = 1}^{\lceil \log_2 n \rceil} \frac{2 \cdot \big(2 - \big(\frac{1}{2^i}\big)^2 \big)^n}{e^{2 \cdot \big(\varepsilon - \frac{1}{2^i}\big)^2 n}} \; .$$
    Fix a threshold $t \in [0, 1]$, and consider the terms with indices before and after $-\log_2 t$ separately:
    \[
    \frac{2 \cdot \big(2 - \big(\frac{1}{2^i}\big)^2 \big)^n}{e^{2 \cdot \big(\varepsilon - \frac{1}{2^i}\big)^2 n}} \le 
    \begin{cases}
        (2 - t^2)^{n+1} \quad \text{for $\frac{1}{2^i} \ge t$ ,} \\
        2 \cdot 2^n / e^{^{2 (\varepsilon - t )^2n}} \quad \text{for $\frac{1}{2^i} < t$ .} 
    \end{cases}
    \]
    Set $t = \varepsilon/2$.
    As we can see in \Cref{fig:graph_max_func}, $2 - t^2 \ge 2 e^{-2(\varepsilon - t)^2}$ for $0 \le \varepsilon \le \frac{1}{2}$, which implies that 
    \[
    2 \cdot (2 - t^2)^n \ge 2 \cdot 2^n / e^{2 (\varepsilon - t)^2n}
    \]
    so the expected runtime of the algorithm can be bounded by $\mathcal{O^*}\big((2 - \frac{\varepsilon^2}{4})^n \big)$.  
\end{proof}

Note that pairwise independence does not allow us to reach this running time because the probability of bad-quality prediction drops slowly, only inversely proportional to $n$.
Thus the probability that the algorithm reaches $i>\log_2 n$ is not as low as we need to beat $2^n$.

The modified version of the prediction-guided MLS (\Cref{subsec:mls}) is attained in exactly the same way.
\begin{lemma}
    The expected runtime of the prediction-guided MLS algorithm is $\big((2 - \frac{1}{c} - \frac{1}{8} \cdot (2 - \frac{1}{c}) \cdot \varepsilon^2)^n\cdot\polyn$.
\end{lemma}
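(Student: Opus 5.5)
We follow the proof of the previous lemma for exhaustive search, with the simplified per-run running time $(2-\frac{1}{c}-\varepsilon_i^2)^n$ in place of $(2-\varepsilon_i^2)^n$. The algorithm requests the prediction $(\mathcal{U}_0,\mathcal{U}_1)$ once. It then runs Algorithm~\ref{alg:noisy-mls} with guesses $\varepsilon_i=2^{-i}$ for $i=1,2,\dots,\lceil\log_2 n\rceil$, stopping as soon as a solution is reported. If every guess fails, it falls back to the plain MLS of \citet{MLS} in time $(2-\frac1c)^n\cdot\polyn$.

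The run with guess $\varepsilon_i$ costs at most $(2-\frac1c-\varepsilon_i^2)^n\cdot\polyn$. If $\varepsilon_i<\varepsilon$, this run fails only if one of the two concentration events from the correctness analysis of \Cref{thm:noisy-mls} fails with $\varepsilon_i$ in place of $\varepsilon$:
\begin{itemize}
\item $|\mathcal{U}_1\cap S^*|\le(\frac12+\frac{\varepsilon_i}{2})k$, or
\item $|\mathcal{U}_0\setminus S^*|\le(\frac12+\frac{\varepsilon_i}{2})(n-k)$.
\end{itemize}
By the Hoeffding argument in the mutual-independence part of \Cref{lemma:sum_near_expectation}, with deviation $(\varepsilon-\varepsilon_i/2)$ instead of $\varepsilon/2$, the probability of either event is at most $2\exp(-2(\varepsilon-\varepsilon_i/2)^2\gamma n)$. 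The factor $\gamma$ enters because only $\min\{k,n-k\}\ge\gamma n$ summands are guaranteed in the MLS branch. The exhaustive branch (small or large $k$) does not use predictions and by \Cref{claim:mls-first-part-bound} already runs within the target bound.

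So the expected time is bounded by the sum over $i$ of the cost of run $i$, weighted by the probability that all earlier runs failed. We split at the threshold $\tau=\varepsilon/2$.
\begin{itemize}
\item \emph{Runs with $\varepsilon_i\ge\tau$.} There are at most $\log_2(2/\varepsilon)$ of them, each costing at most $(2-\frac1c-\varepsilon^2/4)^n$.
\item \emph{Runs with $\varepsilon_i<\tau$.} Each is reached only if the run with the largest guess $\varepsilon_j\in[\varepsilon/4,\varepsilon/2)$ failed. Since $\varepsilon-\varepsilon_j/2\ge 3\varepsilon/4$, that happens with probability at most $2e^{-\Omega(\gamma\varepsilon^2 n)}$. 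Each such run, and the fallback, costs at most $(2-\frac1c)^n$.
\end{itemize}
The expected cost of the second group is therefore at most $\bigl((2-\frac1c)e^{-\Omega(\gamma\varepsilon^2)}\bigr)^n\cdot\polyn$. Using $e^{-x}\le 1-x/2$ on the relevant range (\Cref{lemma:c**-x} with base $e$), this is at most $\bigl((2-\frac1c)(1-\Omega(\gamma\varepsilon^2))\bigr)^n$. This has the form $2-\frac1c-\Omega\bigl((2-\frac1c)\varepsilon^2\bigr)$, which explains the factor $(2-\frac1c)$ in the statement.

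The main obstacle is pinning the constants so both groups fit under the single base $2-\frac1c-\frac18(2-\frac1c)\varepsilon^2$. We need a pointwise inequality analogous to the one shown in \Cref{fig:graph_max_func}:
\[
2-\tfrac1c-\tau^2\;\ge\;\bigl(2-\tfrac1c\bigr)e^{-2(\varepsilon-\tau)^2},
\]
which we will aim to establish together with
\[
2-\tfrac1c-\tfrac{\varepsilon^2}{4}\;\le\;2-\tfrac1c-\tfrac18\bigl(2-\tfrac1c\bigr)\varepsilon^2,
\]
the latter valid since $2-\frac1c\le 2$. For the first inequality we would try $\tau=\varepsilon/2$, bound $e^{-\varepsilon^2/2}\le 1-\varepsilon^2/4$ for $\varepsilon\le\frac12$, and multiply by $2-\frac1c$.

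The factor $\gamma$ in the exponent, as in \Cref{claim:main-runtime}, needs care. Following the paper's convention of simplified running times, we would absorb $\gamma$ into the hidden constant, or treat the predictor's errors on $S^*$ and on its complement jointly over all $n$ elements, whichever the final inequality check tolerates. Summing the two groups then gives the claimed bound up to the polynomial factor.
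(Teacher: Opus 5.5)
Your route is exactly the paper's. You take a single prediction, use guesses $\varepsilon_i=2^{-i}$, and split at $\tau=\varepsilon/2$. Runs with $\varepsilon_i\ge\tau$ are charged $(2-\frac1c-\frac{\varepsilon^2}{4})^n$, later runs are charged $(2-\frac1c)^n$ times an exponentially small failure probability, and the closing estimate is $(2-\frac1c)e^{-\varepsilon^2/2}\le(2-\frac1c)(1-\frac{\varepsilon^2}{4})$. The paper reads that estimate off a plot; you correctly derive it from Lemma~\ref{lemma:c**-x} with base $e$.

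There is an off-by-one in your late group, though. The run $j$ with $\varepsilon_j\in[\varepsilon/4,\varepsilon/2)$ belongs to that group, but it is reached when run $j-1$ fails, not when run $j$ fails. Charged without a weight, it only gives $\varepsilon_j^2\ge\varepsilon^2/16$, which falls short of $\frac18(2-\frac1c)\varepsilon^2$. The paper's sum charges run $i$ with $e^{-2(\varepsilon-2^{-i})^2n}$, and this is the failure bound of run $i-1$, because $\varepsilon-\varepsilon_{i-1}/2=\varepsilon-\varepsilon_i$. For $\varepsilon_i<\varepsilon/2$ this is at most $e^{-\varepsilon^2n/2}=e^{-2(\varepsilon-\tau)^2n}$, which is exactly the factor your pointwise inequality already uses. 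So the fix is to condition the whole late group on the failure of the run whose guess lies in $[\varepsilon/2,\varepsilon)$; its deviation exceeds $\varepsilon/2$.

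The factor $\gamma$, which you leave open, is a real obstruction, and neither of your escapes works.
\begin{itemize}
\item \emph{Absorbing $\gamma$ into a constant.} The constant $\frac18(2-\frac1c)$ is explicit, so there is nothing to absorb it into. With the honest exponent $2(\varepsilon-\varepsilon_i/2)^2\gamma n\le 2\gamma\varepsilon^2 n\le 0.1\,\varepsilon^2 n$ (recall $\gamma\le 0.05$), your late-group bound is at least $\bigl((2-\frac1c)e^{-0.1\varepsilon^2}\bigr)^n$. Since $e^{-0.1\varepsilon^2}\ge 1-0.1\varepsilon^2>1-\frac18\varepsilon^2$, this exceeds the target.
\item \emph{Pooling all $n$ predictions.} This does not help, because a run succeeds only if both one-sided conditions hold. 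The condition on $S^*$ depends on just $k$ predictions, and $k$ may be about $\gamma n$.
\end{itemize}

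The paper does not resolve this either. Its proof simply takes the failure probability with the full $n$ in the exponent, together with the per-run cost $(2-\frac1c-\varepsilon_i^2)^n$, as part of the simplification announced at the start of Section~\ref{sec:quality}. To obtain the statement as written, you must adopt that idealization explicitly. Outside it the lemma cannot hold in full generality: for $c<\frac{31}{30}$ and $\varepsilon$ near $\frac12$, the claimed base $(2-\frac1c)(1-\frac{\varepsilon^2}{8})$ is below $1$.

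For the same reason, drop the appeal to Claim~\ref{claim:mls-first-part-bound}. It gives only $2^{H(\gamma)}<1+\frac12(1-\frac1c)$, which lies below the target base only when $\varepsilon^2\le 4(1-\frac1c)/(2-\frac1c)$. In the simplified model the exhaustive branch is already covered by the per-run cost.
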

\begin{proof}
    We can estimate the expected value of the running time as    
    $$\sum\limits_{i = 1}^{\lceil \log_2 n \rceil} \frac{2 \cdot \big(2 - \frac{1}{c} - \big(\frac{1}{2^i}\big)^2 \big)^n}{e^{2 \cdot \big(\varepsilon - \frac{1}{2^i}\big)^2 n}} \; .$$
    Fix a threshold $t \in [0, 1]$, and consider the terms with indices before and after $-\log_2 t$ separately:
    \[
    \frac{2 \cdot (2 - \frac{1}{c} - \big(\frac{1}{2^i}\big)^2)^n}{e^{2 \cdot \big(\varepsilon - \frac{1}{2^i}\big)^2 n}} \le 
    \begin{cases}
        (2 - \frac{1}{c} - t^2)^{n+1} \quad \text{for $\frac{1}{2^i} \ge t$ ,} \\
        2 \cdot (2 - \frac{1}{c})^n / e^{^{2 (\varepsilon - t )^2n}} \quad \text{for $\frac{1}{2^i} < t$ .} 
    \end{cases}
    \]
    As we can see in \Cref{fig:graph_max_func}, $2 - \varepsilon^2/4 \ge 2e^{-\varepsilon^2/2}$ for $0 \le \varepsilon \le \frac{1}{2}$.
    We multiply this inequality by $(2 - \frac{1}{c})/2$, and get
    \[
    2 - \frac{1}{c} - \big(\frac{1}{4} - \frac{1}{8c}\big) \cdot \varepsilon^2 \ge (2 - \frac{1}{c}) \cdot e^{-2(\varepsilon- \varepsilon/2)^2} \;.
    \]
    Thus, by setting $t = \varepsilon/2$, we can bound the runtime of the algorithm by
    \[
    \mathcal{O^*}\big((2 - \frac{1}{c} - \frac{1}{8} \cdot \big(2 - \frac{1}{c}\big) \cdot \varepsilon^2)^n\big) \; .
    \]
\end{proof}

\section{Conclusion}\label{sec:concl}
We have initiated the study of learning-augmented exact exponential
algorithms, showing that even a noisy predictor only marginally better than random guessing suffices to provably improve the state-of-the-art exact algorithms for a broad class of NP-hard subset selection problems.
Our approach integrates predictions into exhaustive search and monotone local search, yielding improved running times for an entire family of problems simultaneously. The assumptions on the predictor are minimal: pairwise independence of prediction errors suffices for our main results, and the speedup scales smoothly with prediction quality. Beyond the concrete results, our work reveals an unexpected information leverage effect: a linear amount of predictive signal suffices to tame an exponential search.

% We have initiated the study of learning-augmented exact exponential algorithms, demonstrating that even weak predictions can break the computational barriers that govern the best known exact algorithms for a broad class of \classNP-hard subset selection problems. 
% Our general framework integrates a noisy predictor into the monotone local search paradigm, yielding improved running times for an entire family of problems simultaneously---without any problem-specific modifications.

% Two features of our approach are particularly noteworthy. First, the improvement is robust: it holds under the minimal assumption of \emph{pairwise independence} of prediction, which is strictly weaker than the mutual independence assumed in all prior work on predictions for \classNP-hard problems, or does not require knowledge of
% the predictor's accuracy~$\varepsilon$. 
% Second, the improvement is universal across prediction quality: any predictor with a non-trivial advantage $\varepsilon > 0$ over random guessing suffices, and the speedup scales smoothly with~$\varepsilon$.

% Our results also carry a conceptual message. The same $n$ bits of predictive information that suffice to improve approximation ratios for polynomial-time algorithms are shown here to tame an exponential search---yielding provably faster \emph{exact} solutions for \classNP-hard problems. 
% This reveals an unexpected informational leverage: a linear amount of predictive signal controls an exponential computation.

\bibliography{LAA}
\bibliographystyle{apalike}

\end{document}

%% file: defs.tex
\usepackage{graphicx,color}
\usepackage{boxedminipage}
\usepackage{tcolorbox} 

\usepackage{lmodern}
\usepackage{amsmath,amssymb,amsthm}
\usepackage{amsmath,amsfonts}
\usepackage{mathtools}

\usepackage{mdframed}
\usepackage{xifthen}
\usepackage{framed}
\usepackage{tabularx}
\usepackage{todonotes}
\usepackage{amssymb}
\usepackage{cleveref}
\usepackage[ruled,vlined,linesnumbered]{algorithm2e}
\usetikzlibrary{calc}

\newtheorem{example}{Example}
\newcommand{\Oh}{\mathcal{O}}

\newcommand{\Ebb}{\mathbb{E}}
\newcommand{\Var}{\mathrm{Var}}

\DeclareMathOperator{\operatorClassNP}{{\sf NP}}
\newcommand{\classNP}{\ensuremath{\operatorClassNP}}

\DeclareMathOperator{\operatorClassFPT}{{\sf FPT}\xspace}
\newcommand{\classFPT}{\ensuremath{\operatorClassFPT}\xspace}

\newlength{\RoundedBoxWidth}
\newsavebox{\GrayRoundedBox}
\newenvironment{GrayBox}[1]%
{\setlength{\RoundedBoxWidth}{.9\textwidth}
	\def\boxheading{#1}
	\begin{lrbox}{\GrayRoundedBox}
		\begin{minipage}{\RoundedBoxWidth}}%
		{   \end{minipage}
	\end{lrbox}
	\begin{center}
		\begin{tikzpicture}%
			\node(Text)[draw=black!20,fill=white,rounded corners,%
			inner sep=2ex,text width=\RoundedBoxWidth]%
			{\usebox{\GrayRoundedBox}};
			\coordinate(x) at (current bounding box.north west);
			\node [draw=white,rectangle,inner sep=3pt,anchor=north west,fill=white] 
			at ($(x)+(6pt,.75em)$) {\boxheading};
		\end{tikzpicture}
\end{center}}     

\newenvironment{defproblemx}[2][]{\noindent\ignorespaces%
	\FrameSep=6pt%
	\parindent=0pt%
	\vspace*{-1.5em}
	\ifthenelse{\isempty{#1}}{%
		\begin{GrayBox}{#2}%                
		}{%
			\begin{GrayBox}{#2 parameterized by~{#1}}%  
			}
			\begin{tabular*}{\textwidth}{@{\hspace{.1em}} >{\itshape} p{1cm} p{0.8\textwidth} @{}}%        
			}{
			\end{tabular*}%
		\end{GrayBox}%
		\ignorespacesafterend
	}  
	
	\newcommand{\defparproblema}[4]{% FJR Version
		\begin{defproblemx}[#3]{#1}
			Input:  & #2 \\
			Task: & #4
		\end{defproblemx}
	}%

\newcommand{\polyn}{n^{\Oh(1)}}

%% file: pseudocode.tex
\begin{algorithm}[h]
	%\DontPrintSemicolon
	\SetKwIF{IfOr}{OrIf}{ElseOr}{if}{or}{or}{}{} 
    \eIf{$c\le \frac{4}{3}$  \label{line:prelim-part-start}}
    {
    $\gamma\gets (1-\frac{1}{c})/(20\cdot\ln (\frac{c}{c-1}))$\label{line:eval-gamma}\;
    }{
    $\gamma\gets 0.05$\label{line:prelim-part-end}\; 
    }
    \If{$k<\gamma \cdot n$ {\bf or} $k>(1-\gamma)\cdot n$ \label{line:first-part-start}}{
        \ForEach{$X\subset \mathcal{U}$ of size $k$}{
            \lIf{$X$ is a solution}{\Return{$X$}}
        }
        \Return{\textsc{None}}\label{line:first-part-end}\;
    }
    $A_{\gamma,\varepsilon}\gets e^{\frac{1}{2}\gamma^2\varepsilon^2}$\;\label{line:mls-part-start}
    \ForEach{$k_1\in \left[\lceil(\frac{1}{2}+\varepsilon)\cdot k\rceil, k\right]$}{
        \lIf{$n_1>k_1+(n-k)\cdot (\frac{1}{2}-\frac{\varepsilon}{2})$}{\textbf{continue}}
        $k_0\gets k-k_1$\;
        $t\gets \operatorname{argmax}_{t=0}^k \left\{\binom{n-t}{k-t}\cdot \left(c\cdot A_{\gamma,\varepsilon}\right)^t\right\}$\label{line:choice-of-t}\;
        $t_1\gets \operatorname{argmin}_{t_1=0}^{t}\left\{\frac{\binom{n_0}{t-t_1}}{\binom{k_0}{t-t_1}}\cdot\frac{\binom{n_1}{t_1}}{\binom{k_1}{t_1}}\right\}$\label{line:choice-of-t1}\;
        $t_0\gets t-t_1$\;
        \ForEach{$i\in\{0,1\}$}{$\mathcal{S}_i\gets$ $(n_i, k_i, t_i)$-set-inclusion family over $\mathcal{U}_i$\;}
        \ForEach{$X_0\in \mathcal{S}_0$}{
            \ForEach{$X_1\in \mathcal{S}_1$}{
                $X\gets X_0\cup X_1$\;
                Run $\mathcal{B}$ on $(I,X,k-t)$\;
                \lIf{it outputs $Y$ such that $X\cup Y$ is a solution}{\Return{$X\cup Y$}}
            }
        }
    }
    \Return{\textsc{None}}\tcp*{no solution has been found}\label{line:mls-part-end}
	
	\caption{An exact algorithm for \textsc{$\Phi$-Subset} in presence of an FPT-algorithm $\mathcal{B}$ for \textsc{$\Phi$-Extension} and a prediction $(\mathcal{U}_0, \mathcal{U}_1)$ given by a predictor with accuracy $\frac{1}{2}+\varepsilon$.}\label{alg:noisy-mls}
\end{algorithm}